\newtheorem{assumption}{Assumption}
\newtheorem{theorem}{Theorem}
\newtheorem{lemma}{Lemma}
\newtheorem{corollary}{Corollary}
\begin{document}

%
\title{Multi-Target Detection for Cognitive MIMO Radar Networks}
%
%
%

\author{Nicholas~L.K.~Goradia,
        Harpreet~S.~Dhillon,~\IEEEmembership{Fellow,~IEEE,}
        and~R.~Michael~Buehrer,~\IEEEmembership{Fellow,~IEEE}
        
\thanks{The authors are with Wireless@VT, Department of ECE, Virginia Tech, Blacksburg, VA 24061,
USA. Email: \{nickgoradia1, hdhillon, rbuehrer\}@vt.edu. This work was supported, in part, by Lockheed Martin through its University Research program. Any opinions, findings, conclusions, or recommendations expressed in this material are those of the authors and do not necessarily reflect the views of the sponsor.}
}

\maketitle

\begin{abstract}
In this work, we develop centralized and decentralized signal fusion techniques for constant false alarm rate (CFAR) multi-target detection with a cognitive radar network in unknown noise and clutter distributions. Further, we first develop a detection statistic for co-located monostatic MIMO radar in unknown noise and clutter distributions which is asymptotically CFAR as the number of received pulses over all antennas grows large, and we provide conditions under which this detection statistic is valid. We leverage reinforcement learning (RL) for improved multi-target detection performance, where the radar learns likely target locations in a search area. These results are then generalized to the setting of cognitive radar networks, where radars collaborate to learn where targets are likely to appear in a search area. 

We show a fundamental tradeoff between the spatial and temporal domain for CFAR detection in unknown noise and clutter distributions; in other words, we show a tradeoff between the number of radar antennas and the number of temporal samples. We show the benefits and tradeoffs with centralized and decentralized detection with a network of cognitive radars.  


\end{abstract}

\begin{IEEEkeywords}
Radar Network, Cognitive Radar, Reinforcement Learning, Radar Detection, MIMO Radar, Wald Test, CFAR detection, SARSA.
\end{IEEEkeywords}

%
\IEEEpeerreviewmaketitle

\section{Introduction}
%
%
%
%
\IEEEPARstart{T}{he} concept of cognitive radar (CR) and cognitive radar networks (CRNs) was first introduced in \cite{Haykin2006},\cite{Haykin2005}, describing a system where radars continuously observe the environment and learn from these observations. CRNs are characterized by a central node which combines the individual radar outputs. CRNs can be comprised of many different types of radar configurations; those most relevant to this work are multistatic and co-located monostatic MIMO. While multistatic MIMO radar networks can exploit the spatial diversity of a target's radar cross section (RCS) \cite{Haimovich2008}, co-located monostatic MIMO radars offer superior parameter identifiability and flexibility in the beamform design \cite{Stoica2007}. In this paper, we focus on a CRN comprised of distributed co-located monostatic MIMO radars, leveraging both cognitive processing and spatial diversity in the target RCS. 

While CRNs (and CRs) are able to adapt to highly dynamic environments, the classical radar constant false alarm rate (CFAR) detection problem often makes either \textit{a priori} assumptions on the environment or collects secondary data to estimate unknown noise and clutter (which we call disturbance) effects. Making prior assumptions about the disturbance distribution works against the idea of sensing a dynamic environment because if the disturbance distribution changes, then there is an issue of model mismatch which can degrade detection performance. If no prior assumptions are made on the disturbance distribution, then secondary data will have to be collected to estimate it; however, this process will have to be repeated to keep up with general non-stationary processes which is inefficient. Furthermore, dyanmic environments generally contain multiple targets which should be detected. 

Motivated by these issues, we develop a Wald-type detection statistic which is robust to general, unknown disturbance distributions. We then generalize this detection statistic to radar networks using both centralized and decentralized signal fusion. We define centralized signal fusion by the fusion of each individual radar's entire received signal at the central node and decentralized signal fusion as the combination of individual radar's detection statistics at the central node. We show that by applying reinforcement learning (RL) to radar networks, multi-target detection performance can be greatly improved. 

\subsection{Related Work}

It has been shown that complex elliptically symmetric (CES) random variables, specifically, the subclass known as compound Gaussian (CG) random variables, are appropriate for modeling noise and clutter in general radar environments as the assumption of AWGN does not always hold \cite{Watts1987}, \cite{Ollila2012}. In general, detection statistics such as the popular generalized likelihood ratio test (GLRT) require assumptions be made on the disturbance distributions. Our goal is to perform detection in MIMO radar without knowing the disturbance distribution and without collecting secondary data.

CFAR detection for arbitrary non-Gaussian CES distributed disturbances has been studied through the lens of random matrix theory in works such as \cite{Kammoun2018}. CFAR detection with CR where the radar chooses between a library of distributions is explored in \cite{metcalf2015}; however, these methods rely on secondary data. CFAR detection, without secondary data, in a single snapshot with arbitrarily distributed disturbances was originally examined in \cite{greco2020} for the massive MIMO (MMIMO) case by leveraging asymptotic properties gained through the use of a large number of antennas. The methods in \cite{greco2020} have very high hardware costs associated as MMIMO will need a very large number of RF chains. Furthermore, this work only covers the case of monostatic radar. To date, there has been no generalization of this method to radar networks; extending it to a network of MMIMO radars would demand a prohibitive number of antennas and RF chains, resulting in high implementation costs. 

Furthermore, radar detection requires a search over a predetermined space. If this search is done by scanning one cell at a time, then this is clearly not the most efficient approach. If the entire area is searched at once; however, then the power of the transmitted beam is low and multi-target detection performance is poor. These issues are solved by learning target patterns to efficiently search multiple cells at once. This is originally proposed in \cite{Greco2021} where a state-action-reward-state-action (SARSA) RL algorithm is paired with the detection statistic developed in \cite{greco2020}. The $\epsilon-$greedy policy in \cite{Greco2021}, is later improved on in \cite{greco2022} to solve the issue of missed target recovery by dynamically adapting hyper-parameters in the SARSA algorithm. The work in \cite{wang2024} further improves on the missed target aspect of detection with SARSA by using Bayesian priors to assist in angle bin selection during the beamforming step; however, this paper only considers AWGN. All of these RL methods consider monostatic MMIMO radar, whereas we wish to use RL for a network of radars.

The area of CRNs allows for the exploration of many problems such as spectrum sensing, allocation and waveform selection which have been studied in works such as \cite{Tony2018}, \cite{howard2021},\cite{howard2023}. We do not make any considerations on resource allocation or spectrum sharing in our work as we consider each radar to operate at a separate frequency to avoid interference; however, our work does consider waveform selection in CRNs. Our work differs in the sense that we do not make waveform selection under a set of resource allocation constraints, rather, our waveform selection problem is motivated by beamforming for multi-target detection. Waveform design in CRNs is considered in \cite{Rossetti2018}, however, this work considers multistatic radar networks and robust waveform design for clutter. In our work, we consider a detection statistic which is robust to unknown clutter independent of waveform design (as long as the transmitted signals are orthonormal). 

One can imagine that a network of radars, working collaboratively to sense the environment and learn, will outperform a single radar; however, CRNs introduce their own challenges such as synchronization, resource allocation, node placement, and determining how to combine the individual radar outputs. We focus on the issue of output combination which can be subdivided into two cases: centralized and decentralized fusion. 
    
    In terms of signal fusion for radar networks, there has been much work exploring GLRTs which combine information from multiple radars. In \cite{Varshney1986}, the optimal fusion rule --under decentralized detection with individual radars making hard, binary decisions-- is developed as a function of the probabilities of hypothesis $0$ and $1$. In \cite{Varshney1996}, detection for distributed radar networks is considered where Neyman-Pearson tests are analyzed. Specifically, it is shown that the Neyman-Pearson test is optimal in decentralized detection such that each individual radar performs the uniformly most powerful test. Our detection statistics differ in the sense that we consider Wald-type tests rather than Neyman-Pearson tests due to parameter estimation where the Neyman-Pearson test would not be optimal. The general problem of distributed signal fusion in sensor networks under the assumption of conditional independence for each sensor in the network is further explored in: \cite{Reibman1987} where optimal decentralized detection is derived and performance is shown for non-Gaussian but known noise distributions, and the authors of \cite{thomopoulos1987} further consider Neyman-Pearson tests for decision fusion in sensor networks.

    There are also various works which explore decision fusion under correlated observations between the sensors in the network such as \cite{Lauer1982} which explores this under white noise, and \cite{drakopolous} and \cite{kam1992} discuss a Neyman-Pearson test for binary decisions at the individual sensors.
    
    Furthermore, in \cite{kay2013}, performance loss in centralized against decentralized detection with the GLRT is analyzed. In \cite{Mrstik1978}, ``M by N" detection for radar networks is introduced. We do not consider this type of decentralized detection as it offers worse performance than the chosen decentralized fusion described in Sec.~\ref{sec:dec_det}. Detection in passive MIMO radar networks is considered in \cite{Hack2014}.


\subsection{Contributions}
To summarize, we consider a network of distributed co-located monostatic MIMO radars which are working collaboratively to perform multi-target detection in an unknown disturbance distribution. The key contributions of this paper are listed as follows. 
\subsubsection{Robust detection statistic for MIMO radar} We develop a CFAR detection statistic which is robust to general, unknown disturbance processes for co-located MIMO radar. We accomplish this by leveraging the spatio-temporal domain, and as such, we are able to accomplish radar detection in unknown disturbances using significantly fewer antennas than existing work in this area. 

\subsubsection{Outline necessary conditions for robust CFAR detection to be valid} We outline five necessary conditions which are, as a whole, unique to our detection statistic. We describe why these necessary conditions are reasonable to hold in general. 

\subsubsection{Modeling of heavy-tailed 2-D disturbance processes} Since the disturbance process in our setting is, in general, 2-D in nature, we carefully consider modeling and CFAR detection in this regime. Because autoregressive (AR) processes are known to capture heavy tailed characteristics of various clutter processes, we show that there exists at least a class of 2-D AR processes under which our detection statistic is CFAR. 
\subsubsection{Develop centralized and decentralized robust CFAR detection statistics for CRNs} We develop both centralized and decentralized detection statistics which are CFAR and robust to general unknown disturbance distributions for CRNs. We describe the benefits and drawbacks for both detection methods, and we derive the associated probabilities of false alarm and detection. 

\subsubsection{Reinforcement Learning for CR and CRNs} We outline the state, action, and reward for RL to be applied to multi-target detection for CRNs. We keep this section general so that any RL algorithm can be applied to the multi-target detection problem. 

\subsubsection{Design Insights} We show multi-target detection performance with various algorithms for monostatic radar in unknown disturbance processes. We also show multi-target detection performance for various radar network scenarios. We determine that by using a network of radars, detection performance is significantly improved, and finally, we show that RL algorithms improve multi-target detection performance in lower SNR settings such that a network of many radars is able to improve multi-target detection performance without the need for RL. 

\subsection{Notation}
We denote some variable $x$ associated with the $i$th radar in a network as $x^{\{i\}}$ which is raised to the $n$th power as $(x^{\{i\}})^n$. Note that in general, a variable $x$ raised to the $n$th power is denoted as $x^n$. We use this superscript notation to avoid overloading subscript indices and to avoid confusion with exponentiation. We use lowercase $\mathbf{v}$ and uppercase $\mathbf{A}$ bold lettering to denote vectors and matrices, respectively. The notation $\mathbf{A}^T$ and $\mathbf{A}^H$ denotes transpose and conjugate transpose. The conjugate of $a$ is written as $a^*$. The trace of a matrix $\mathbf{A}$ is written as ${\rm tr}\{\mathbf{A}\}$. The Kronecker product is written as $\otimes$. Statistical expectation is represented by $\mathbb{E}[\cdot]$. The $i,j$th entry of a matrix is denoted by $[\cdot]_{i,j}$. The indicator function is written as $\mathds{1}(\cdot)$. The function ${\rm mod}(a,b)$ means $a\mod b$. Given some matrix $\mathbf{X}\in \mathbb{F}^{N\times N}$ where $\mathbb{F^{N\times N}}$ is some arbitrary field, the function $\rm vec(\cdot)$ is defined as ${\rm vec}(\mathbf{X}) = [x_{11}, x_{21}, \dots x_{N1}, \dots x_{NN}]^T$. Given some collection of matrices $\mathbf{X}_1,\dots,\mathbf{X}_M$ such that $\mathbf{X}_i \in \mathbb{F}^{N \times N}$ for all $i = 1,\dots,M$, we define 
\begin{equation*}
    {\rm blkdiag}(\mathbf{X}_1, \dots, \mathbf{X}_M) = 
    \begin{bmatrix}
        \mathbf{X}_1 & \mathbf{0} & \dots & \mathbf{0} \\
        \mathbf{0} & \mathbf{X}_2 & \dots & \mathbf{0} \\
        \vdots & \vdots & \ddots & \vdots \\
        \mathbf{0} & \mathbf{0} & \dots & \mathbf{X}_M
    \end{bmatrix} \in \mathbb{F}^{NM \times NM}.
\end{equation*} 
For some real valued function $f(x)$ and positive real valued function $g(x)$, $f(x) = \mathcal{O}(g(x))$ means that $\exists a \in \mathbb{R}_{\geq 0}, x_0 \in \mathbb{R}$ such that $|f(x)| \leq ag(x), \quad \forall x\geq x_0$.


\section{Signal Model}
We first build the signal model for the monostatic MIMO setting. Once this foundation is built, we generalize the signal model to a network of radars. 
\subsection{Monostatic Radar}\label{sec:monostatic_radar}
Consider a co-located MIMO radar with $\rm M_R$ receive and $\rm N_T$ transmit antennas. We denote the transmit array manifold vector as $\mathbf{a}(\cdot)$ and similarly the array manifold vector as $\mathbf{b}(\cdot)$. Let $\alpha$ represent the deterministic and unknown two-way pathloss and RCS constant. Note that this value is kept deterministic and unknown such that it is general and not adhering to any specific model. It is a function of the target location; however, we will not denote this relationship explicitly for ease of notation. For a target located at angle $\phi$ and delay $\tau$, the complex baseband received signal can be written as \eqref{eqn:rxsig}, \cite{1Friedlander2012}, \cite{2Friedlander2012}.
\begin{equation}\label{eqn:rxsig}
    \mathbf{x}(t) = \alpha \mathbf{b}(\phi)\mathbf{a}^T(\phi)\mathbf{s}(t-\tau)e^{j\omega t} + \mathbf{n}(t), \quad t \in [0,T],   
\end{equation}
where $\tau$ and $\omega$ represent the time delay and Doppler shift induced by the target.  

For $\rm N_T$ independent, orthonormal signals $\mathbf{u}(t) \in \mathbb{C}^{\rm N_T}$, the transmitted signal is given as $\mathbf{s}(t) = \mathbf{W}\mathbf{u}(t)$ for some beamforming weight matrix $\mathbf{W}$ adhering to power constraint ${\rm tr}\{\mathbf{WW}^H\} = P_{\rm T}$ for total transmitted power $P_{\rm T}$ \cite{2Friedlander2012}. Note that $\mathbf{s}(t) \in \mathbb{C}^{\rm N_T \times 1}$ and $\mathbf{W} \in \mathbb{C}^{\rm N_T \times N_T}$. Due to the use of orthonormal transmit signals, matched filtering results in a virtually larger received signal of size  $\rm N_TM_R$ \cite{Stoica2007}, \cite{1Friedlander2012}. 

More specifically, we match filter the received signal using the set of orthonormal transmit signals $\mathbf{u}(t)$, resulting in ${\rm N_T}$ matched filters tuned to $\bar{\tau}$ and $\bar{\omega}$ at each receive antenna. The output of the matched filters is written as \eqref{eqn:mf} \cite{1Friedlander2012},

\begin{align}\label{eqn:mf}
    \mathbf{X}(\bar{\tau}, \bar{\omega}) &= \int_{0}^{T}\mathbf{x}(t)\mathbf{u}^H(t-\bar{\tau})e^{-j \bar{\omega}t} {\rm d}t\nonumber \\
     &= \alpha\mathbf{b}(\phi)\mathbf{a}^T(\phi)\mathbf{W}\nonumber\\
    &\times \int_{0}^{T}\mathbf{u}(t-\tau)\mathbf{u}^H(t-\bar{\tau})e^{-j(\bar{\omega} - \omega)t}{\rm d}t \nonumber \\
    &+ \int_0^{T}\mathbf{n}(t-\tau)\mathbf{u}^H(t-\bar{\tau})e^{-j\bar{\omega}t}{\rm d}t \nonumber \\
    &=\alpha\mathbf{b}(\phi)\mathbf{a}^T(\phi)\mathbf{W}\nonumber\\
    &\times \int_{0}^{T}\mathbf{u}(t-\tau)\mathbf{u}^H(t-\bar{\tau})e^{-j(\bar{\omega} - \omega)t}{\rm d}t +  \mathbf{C}(\bar{\tau},\bar{\omega}), 
\end{align}
where $\mathbf{C}(\cdot,\cdot)$ represents the disturbance. If we assume the filter is matched to the delay $\tau$ and Doppler $\omega$ \cite{Greco2021}, we can write the matched filter output due to a single PRI as \eqref{eqn:rxmf}

\begin{equation}\label{eqn:rxmf}
    \mathbf{x} = {\rm vec}(\mathbf{X}(\tau, \omega)) = \alpha \mathbf{v} + \mathbf{c},
\end{equation}
where $\mathbf{c} \in \mathbb{C}^{\rm N_TM_R}$ is the disturbance and 
$\mathbf{v} = \left(\mathbf{W}^T\mathbf{a}(\phi)\right) \otimes \mathbf{b}(\phi)$ \cite{Greco2021}.

We assume that the spatial search space is divided into $l = 1, \dots, L$ angle bins and that \eqref{eqn:rxmf} is processed by a bank of spatial filters tuned to a specific angle range as in \cite{Greco2021}; the received signal for the $l$th angle bin is given as
\begin{equation}\label{eqn:rxmfangle}
    \mathbf{x}_l = \alpha_l\left(\mathbf{W}^T\mathbf{a}(\phi_l)\right)\otimes\mathbf{b}(\phi_l) + \mathbf{c}_l.
\end{equation}
For each angle bin, consider a collection of $K$ received signal pulses (corresponding to one PRI each) to be one coherent processing interval (CPI). We drop the subscript $l$ for ease of notation and write the collection of $K$ pulses as $\tilde{\mathbf{X}} = \left[\mathbf{x}[1], \dots, \mathbf{x}[K]\right]$ where $\tilde{\mathbf{X}} \in \mathbb{C}^{\rm{M_R N_T} \times K}$. We then vectorize $\tilde{\mathbf{X}}$ as 
\begin{align}\label{eqn:rxvec}
   {\rm vec}(\tilde{\mathbf{X}}) = \tilde{\mathbf{x}} 
   = [x_1[1], \dots, x_N[1], \dots x_1[K], \dots, x_N[K]],
\end{align}
where $x_n[k]$ in \eqref{eqn:rxvec} represents the signal at the $n$th spatial channel and $k$th pulse. We can then rewrite $\tilde{\mathbf{x}}$ in the form of \eqref{eqn:rxmf}. Define $\tilde{\mathbf{W}} = {\rm blkdiag}\left(\mathbf{W}, \dots, \mathbf{W}\right) \in \mathbb{C}^{\rm N_T K \times N_T K}$, and $\tilde{\mathbf{a}}(\phi) = [{\mathbf{a}}(\phi)^T, \dots, {\mathbf{a}}(\phi)^T]^T \in \mathbb{C}^{\rm N_T K \times 1}$ then we can rewrite \eqref{eqn:rxmfangle} as 
\begin{equation}\label{eqn:blckdiag}
    \tilde{\mathbf{x}} = \alpha\left(\tilde{\mathbf{W}}^T \tilde{\mathbf{a}}(\phi)\right) \otimes \mathbf{b}(\phi) + \tilde{\mathbf{c}};
\end{equation}
for each angle bin, the received signal is given as,
\begin{equation}\label{rxvec2}
    \tilde{\mathbf{x}}_l = \alpha_l \tilde{\mathbf{v}}_l + \tilde{\mathbf{c}}_l,
\end{equation}
where $\tilde{\mathbf{v}}_l$ and $\tilde{\mathbf{c}}_l$ are arranged as in \eqref{eqn:blckdiag} and \eqref{eqn:rxvec}.

\subsection{Radar Networks}
We now describe the extension to radar networks. Consider a network of $R$ radars. Each radar in the network will follow the signal model outlined in Sec.~\ref{sec:monostatic_radar}. Suppose for some reference radar in the network, which has its view partitioned into $L$ angle bins, that there exists a bijective mapping between the set of angle bins for all other radars and the reference radar. In other words, for the $l$th angle bin at reference radar $i$, any other radar $j \neq i$ has a unique angle bin $q$ associated only with bin $l$ of the $i$th radar. We assume that each radar in the network is a co-located MIMO radar which transmits on a different frequency band to avoid interference. Furthermore, we assume perfect time synchronization between radars, and a noise and fading free communications channel between the radars and the central processing node. The received signal at the $i$th radar during the $p$th CPI is written as 
\begin{equation}
    \tilde{\mathbf{x}}^{\{i\}}[p] = \alpha^{\{i\}}[p]\tilde{\mathbf{v}}^{\{i\}}[p] + \tilde{\mathbf{c}}^{\{i\}}[p].
\end{equation}

\subsection{Assumptions}\label{sec:Assumptions}
In order to formulate our detection problem, we make the following assumptions which are to be justified at the end of this section. 

\begin{assumption}\label{assumption1}
    The radar cross section $\alpha^{\{i\}}$ remains constant over some number of pulses $K$ for all $i=1,\dots,R$ radars.
\end{assumption}
\begin{assumption}\label{assumption2}
    The disturbance distribution remain stationary over some number of pulses $K$ for all radars.
\end{assumption}
\begin{assumption}\label{assumption3}
    Within the search space, the number of targets, their corresponding angle bins, and SNR remain unchanged over $K$ pulses.
\end{assumption}
\begin{assumption}\label{assumption4}
    The disturbance process seen by each radar has an associated autocorrelation function which decays at least polynomially. This allows us to estimate the covariance matrix as sparse. Formally, following the main assumption in \cite{greco2020}, let $\{\tilde{c}_n\}$ be a discrete second order stationary process of the disturbances defined in \eqref{rxvec2}. Then $\forall n, \quad \mathbb{E}[\tilde{c}_n\tilde{c}_{n-m}^*] = \mathcal{O}(|m|^{-\gamma})$, $m\in\mathbb{Z}$ and $\gamma > \rho/(\rho-1)$ for some $\rho > 1$.
\end{assumption}
\begin{assumption}\label{assumption5}
    The disturbance process received by one radar is independent of the disturbance process received by all other radars. 
\end{assumption}

Note that these assumptions are reasonable for targets whose position and orientation are not changing over $K$ pulses. Furthermore, the final assumption is justified as we assume the radars in the network are distributed to exploit SNR gains from spatial diversity. Assumption~\ref{assumption4} puts a constraint on the autocorrelation properties of the disturbance process. Specifically, we require that the process has an autocorrelation function which decays at least polynomially in time and in space. This property certainly is reasonable in the spatial domain as works such as \cite{greco2020} have used this to develop a detection statistic; however, one might ask if this is reasonable temporally. We discuss this assumption for the temporal domain in Sec.~\ref{sec:cluttermodels}. Furthermore, general heavy-tailed stochastic processes can be modeled by an AR process. We show, in Lemma~\ref{lemma1}, that the vectorized 2D-AR process follows Assumption~\ref{assumption4}.

    \subsection{Clutter Models}\label{sec:cluttermodels}
    As radar resolution capabilities have increased, it has been shown that radar clutter follows non-Gaussian distributions. Both land and sea clutter have been modeled using distributions such as the K-distribution \cite{rosenberg2012},\cite{sayama2001}. Furthermore, it has been shown that the temporal correlation of clutter in windy vegetation environments can be modeled by the intrinsic clutter motion model \cite{Billingsley1997}. This model has a Doppler spectrum with exponential decay which means that its autocorrelation (Fourier transform) follows polynomial decay. Furthermore, for sufficiently dynamic conditions (related to wind speeds and wave height), sea clutter has been described as having temporal autocorrelation which decays at least in polynomial time \cite{rosenberg2012}. It can be concluded that there are clutter models for both land and sea clutter which have temporal autocorrelation functions that decay at least in polynomial time, in line with Assumption~\ref{assumption4}.

\section{Cognitive Radar for Detection}
We now introduce the definition of state, action, and reward which will be used to apply reinforcement learning for multi-target detection. The overall goal is to point power into bins where there are likely targets. 

\subsection{Monostatic Radar}
Assuming that our signal is processed in $K$ pulse batches (one CPI is equivalent to $K$ PRIs) we define a time step of such batches as $p$ such that the test statistic in one angle bin $l$ over the $p$th $K$ number of pulses (i.e. the $p$th CPI) is denoted as $\Lambda_{p,l}$.

\subsubsection{State Space}\label{sec:state_space}
We define the state space as $\mathcal{S} = \{0,1,\dots,T_{\rm max}\}$ where $T_{\rm max}$ is the maximum number of targets in the environment. The state space is the set of the number of possible targets in the environment. The state at time instant $p$ --written with some overload of notation-- $s_p$ is the number of detections made across all angle bins at CPI $p$,
\begin{equation}\label{eqn:state}
    s_p = \min \left\{\sum_{l=0}^{L-1}{\mathds{1}(\Lambda_{p,l} \geq \lambda)}, T_{\rm max} \right\},
\end{equation}
where $\lambda$ is the detection threshold. 
\subsubsection{Reward}\label{sec:Reward}
Let $\hat{P}_{{\rm D},l}$ be the estimated probability of detection for some angle bin $l$, and let $\mathcal{L}$ be the set of angles associated with the state $s_p$. The reward at time $p$ is given by 
\begin{equation}\label{eqn:reward}
    r_p = \sum_{l\in\mathcal{L}}\hat{P}_{{\rm D},l} - \sum_{l\notin\mathcal{ L}}\hat{P}_{{\rm D},l}.
\end{equation}

\subsubsection{Action Space}
Let $B_{\max}$ be the maximum number of bins that we point power into at once. Since we have a maximum number of targets $T_{\max}$, we choose $B_{\max}=T_{\max}$ as we wouldn't point power into directions where we know there won't be any targets. The action space $\mathcal{A}$ is defined as $\mathcal{A} =\{0,1,\dots,B_{\max}\}$. 
Note that the action space is mathematically equivalent to the state space; however, the action is not necessarily the same as the state. More intuitively, the state represents the number of targets that we have detected, and the action represents the number of bins which the algorithm points the transmit beam into.

\subsection{Cognitive Radar Networks}
Each radar in the network will have their own action; however, they will all share a common state, reward, and detection statistic $\boldsymbol{\Lambda}$ which is created by fusing information from each individual radar at the central node. The common statistic is organized according to the angle bins of the reference radar. The bijective mapping between angle bins of radars is essential so that each element of a radar's individual detection statistic can be associated with an element of the reference radar's detection statistic. 

Because the reward is calculated based on the estimated probability of detection, the estimated probability of detection is calculated as described in Theorem.~\ref{thm:Pd_decentralized} for decentralized fusion and with the standard Marcum Q function \cite{nuttall1975} for centralized fusion detection.

\section{Main Results}
We develop the key results first for the case of a single CR, then using these key results, we develop centralized and decentralized fusion methods for CRNs. 
\subsection{CFAR Detection for Monostatic MIMO}
\subsubsection{Modeling Noise and Clutter} \label{sec:modeling}
An unknown disturbance process --which is complex discrete with continuous power spectral density-- has second order statistics which can be approximated by an AR process \cite{stoica2005spectral}, \cite{greco2020}. Furthermore, the AR process can capture properties of heavy tailed processes \cite{greco2020}. We show that there exist at least a class of AR processes which adhere to Assumption~\ref{assumption4}, such that this assumption reasonably holds for arbitrary heavy-tailed distributions. Furthermore, we will show that we can perform CFAR detection with MIMO in unknown noise and clutter distributions which have the property outlined in Assumption~\ref{assumption4}.

We present the following lemma to support the main theorem of this section. 
\begin{lemma}\label{lemma1}
    Consider a 2-D AR($p,q$) noise process defined as follows,
    \begin{equation}\label{eqn:2dAR}
        {c}[n,k] = \sum_{i=1}^{p}\sum_{j=1}^{q}\left[\mathbf{\Phi}\right]_{i,j}{c}[n-i,k-j] + \epsilon[n,k],
    \end{equation}
    for $\mathbf{c}\in \mathbb{C}^{N\times K}$ and where $\epsilon[\cdot,\cdot]$ are zero mean i.i.d. innovations; furthermore, let \eqref{eqn:2dAR} have an associated analytic minimum phase filter \cite{Marzetta1978ALP}, then the autocorrelation function $\tilde{\rho}[r]$ associated with $\tilde{\mathbf{c}} = {\rm vec}\{\mathbf{c}\}$ decays exponentially.
\end{lemma}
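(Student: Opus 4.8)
The plan is to pass to the spectral (transfer-function) domain, where exponential decay of an autocorrelation sequence is equivalent to analyticity of the power spectral density in a neighborhood of the unit torus, prove an exponential bound on the genuine two-dimensional autocorrelation, and then translate that two-dimensional bound into a one-dimensional bound in the lag of $\tilde{\mathbf{c}}$ through the column-major vectorization.

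First I would write the process in \eqref{eqn:2dAR} as the output of a two-dimensional linear filter driven by the i.i.d.\ innovations, with transfer function $H(z_1,z_2) = 1/A(z_1,z_2)$ where $A(z_1,z_2) = 1 - \sum_{i=1}^{p}\sum_{j=1}^{q}\left[\mathbf{\Phi}\right]_{i,j} z_1^{-i} z_2^{-j}$. The hypothesis that \eqref{eqn:2dAR} admits an analytic minimum-phase filter in the sense of \cite{Marzetta1978ALP} is exactly what I would invoke to guarantee that $A$ is zero-free not merely on the unit torus but on a full polyannular neighborhood $1-\delta \le |z_1|,|z_2| \le 1+\delta$ for some $\delta>0$. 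Consequently $H$, and hence the power spectral density $S(z_1,z_2) = \sigma_\epsilon^2\, H(z_1,z_2)\,H^*(1/z_1^*,1/z_2^*)$, extends to a function analytic on that neighborhood.

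Next I would recover the two-dimensional autocorrelation $\rho[m_1,m_2]$ as the $(m_1,m_2)$ Fourier coefficient of $S$ and extract exponential decay by a multidimensional Paley--Wiener / contour-shift argument: since $S$ is analytic on the polyannulus, the defining contour integral may be deformed onto $|z_i| = 1\pm\delta$, with the sign chosen according to $\mathrm{sgn}(m_i)$, which introduces a factor $e^{-\delta(|m_1|+|m_2|)}$ and yields $|\rho[m_1,m_2]| \le C\, e^{-\delta(|m_1|+|m_2|)}$ for a constant $C$ depending only on the filter. Equivalently, one may show the impulse response $h[n,k]$ decays exponentially and note that $\rho$ is its deterministic autocorrelation, i.e.\ a convolution of two exponentially decaying sequences. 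I would then connect this to $\tilde{\rho}[r]$: under the column-major vectorization $\tilde{\mathbf{c}} = {\rm vec}\{\mathbf{c}\}$, index $m$ corresponds to $(n,k)$ via $m = (k-1)N + n$, so a lag $r$ maps to a spatial lag $m_1 = n-n'$ with $|m_1|\le N-1$ and a temporal lag $m_2$ obeying $r = m_2 N + m_1$. Hence $|m_2| \ge (|r|-(N-1))/N$ grows linearly in $|r|$, and substitution gives $|\tilde{\rho}[r]| \le C\, e^{-\delta|m_2|} \le C'\, e^{-(\delta/N)|r|}$, which is the claimed exponential decay. A closing remark would note that taking this bound uniformly over the admissible $(m_1,m_2)$ sidesteps the mild non-stationarity introduced by vectorizing a finite field, and that exponential decay trivially implies the $\mathcal{O}(|m|^{-\gamma})$ rate required by Assumption~\ref{assumption4}.

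I expect the first step to be the main obstacle: unlike the scalar case, two-dimensional stability does not reduce to a factorization into first-order factors, so establishing that $A$ is zero-free on a genuine neighborhood of the torus (rather than on the torus alone, which would only give decay faster than any polynomial but not a clean exponential rate) must be drawn carefully from the minimum-phase characterization in \cite{Marzetta1978ALP}. Once the analyticity margin $\delta$ is secured, both the Paley--Wiener decay estimate and the vectorization bookkeeping are routine.
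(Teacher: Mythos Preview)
Your proposal is correct and follows essentially the same two-step route as the paper: first obtain an exponential bound $|\rho[m_1,m_2]|\le A\gamma_1^{|m_1|}\gamma_2^{|m_2|}$ on the 2-D autocorrelation from the analytic/minimum-phase hypothesis, then convert it to a 1-D exponential bound in $r$ via the column-major index bookkeeping. The only difference is presentational: the paper simply cites \cite[Thm.~4.1(b)]{Marzetta1978ALP} and \cite{Ranganath1985} for the 2-D bound, whereas you unpack that citation into an explicit Paley--Wiener/contour-shift argument on the polyannulus; the vectorization step and the final rate are the same.
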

\begin{proof}
    In order to properly define a mapping between elements (or indices) of $\mathbf{c}$ and $\tilde{\mathbf{c}}$, let us consider $n\in [1,N], k\in [1,K]$. By \cite[Theorem 4.1(b)]{Marzetta1978ALP} and \cite{Ranganath1985}, we have that the autocorrelation of \eqref{eqn:2dAR} is bounded exponentially as
    \begin{equation}\label{eqn:2DBound}
        \rho[n,k] \leq A \gamma_1^{\lvert n\rvert}\gamma_2^{\lvert k\rvert},
    \end{equation}
    for some constant $A \in \mathbb{R}$ and $\gamma_1,\gamma_2 \in (0,1)$ where the Spectral Density Function of \eqref{eqn:2dAR} denoted by $S(z_1,z_2)$ is analytic in $\{\gamma_1 < \lvert z_1\rvert < 1/\gamma_1,\gamma_2 < \lvert z_2\rvert < 1/\gamma_2\}$. We then have that, from the vectorization of \eqref{eqn:2DBound}, 
    \begin{align}
        \tilde{\rho}[r] &\leq A\gamma_1^{\lvert n\rvert}\gamma_2^{\lvert k\rvert} \nonumber \\
        &=A\gamma_1^{\lvert 1+{\rm mod}\left((r-1),NK\right)\rvert}\gamma_2^{\lvert 1 + \lfloor \frac{r-1}{NK}\rfloor\rvert} \nonumber \\
        &= A\gamma_1^{\lvert r-NK\lfloor\frac{r-1}{NK} \rfloor\rvert}\gamma_2^{\lvert 1 + \lfloor \frac{r-1}{NK}\rfloor\rvert} \nonumber \\
        & \leq A \left[\left(\gamma_1\gamma_2\right)^{\frac{1}{NK}}\right]^{\lvert r \rvert},
    \end{align}
    so we have that the autocorrelation function of the vectorized 2-D AR process decays exponentially. 
\end{proof}
Note that the above lemma verifies the existence of a subset of 2-D AR($p,q$) processes which have vectorized autocorrelation functions which decay exponentially, and that this is a sufficient condition. We make no claim towards the derivation of such processes. 

\subsubsection{CFAR Detection Statistic}
The following theorem introduces the CFAR detection statistic used. 
\begin{theorem}\label{theorem1}
    Let $N = {\rm M_R N_T}K$. A robust Wald-type detector can be constructed as 
    \begin{equation}\label{eqn:thm1}
        \Lambda = \frac{2|\tilde{\mathbf{v}}^H\tilde{\mathbf{x}}|^2}{\tilde{\mathbf{v}}^H \hat{\mathbf{\Gamma}} \tilde{\mathbf{v}}},
    \end{equation}
    where $\hat{\mathbf{\Gamma}}$ is the estimated covariance matrix of the disturbances $\tilde{\mathbf{c}}$ calculated by \eqref{gamma_hat}. As $N$ approaches infinity, the distribution of $\Lambda$ approaches a chi-squared distribution under both $H_0$ and $H_1$.
\end{theorem}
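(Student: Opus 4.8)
The plan is to treat \eqref{eqn:thm1} as a Wald statistic for the scalar complex parameter $\alpha$ in the model $\tilde{\mathbf{x}} = \alpha\tilde{\mathbf{v}} + \tilde{\mathbf{c}}$, testing $H_0:\alpha=0$ against $H_1:\alpha\neq0$, and to establish the limiting distribution by separating the numerator and denominator and recombining them through Slutsky's theorem. First I would note that the matched-filter output $T_N := \tilde{\mathbf{v}}^H\tilde{\mathbf{x}}$ is the natural estimator of $\alpha$ up to the normalization $\tilde{\mathbf{v}}^H\tilde{\mathbf{v}}$, and that under $H_0$ it reduces to the weighted disturbance sum $\tilde{\mathbf{v}}^H\tilde{\mathbf{c}} = \sum_{n=1}^{N}\tilde{v}_n^*\tilde{c}_n$. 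Writing $\sigma_N^2 := \tilde{\mathbf{v}}^H\mathbf{\Gamma}\tilde{\mathbf{v}} = \mathbb{E}\bigl[|\tilde{\mathbf{v}}^H\tilde{\mathbf{c}}|^2\bigr]$ for the true variance, the goal reduces to showing (i) $T_N/\sigma_N$ converges to a standard complex normal and (ii) the plug-in variance $\tilde{\mathbf{v}}^H\hat{\mathbf{\Gamma}}\tilde{\mathbf{v}}$ is ratio-consistent for $\sigma_N^2$; together these give $\Lambda = 2|T_N|^2/(\tilde{\mathbf{v}}^H\hat{\mathbf{\Gamma}}\tilde{\mathbf{v}}) \to \chi_2^2$, since the modulus-squared of a standard complex normal, scaled by $2$, is exactly chi-squared with two degrees of freedom.

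For step (i) I would invoke a central limit theorem for weakly dependent sequences applied to the scalar sum $T_N$ as $N\to\infty$. This is precisely where Assumption~\ref{assumption4} does the work: the requirement $\mathbb{E}[\tilde{c}_n\tilde{c}_{n-m}^*] = \mathcal{O}(|m|^{-\gamma})$ with $\gamma>\rho/(\rho-1)$ guarantees a summable (indeed, by Lemma~\ref{lemma1}, exponentially decaying in the 2-D AR case) autocovariance, so the sequence is sufficiently mixing and $\sigma_N^2$ grows at the right rate for the effective number of independent terms to diverge. I would then verify a Lindeberg- or Lyapunov-type condition on the triangular array $\{\tilde{v}_n^*\tilde{c}_n\}$ to conclude asymptotic complex normality of $T_N/\sigma_N$. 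Under $H_1$ the same argument applies to the centered sum, while the mean contributes the deterministic shift $\alpha\,\tilde{\mathbf{v}}^H\tilde{\mathbf{v}}$, so $T_N/\sigma_N$ converges to a complex normal with nonzero mean and hence $\Lambda$ converges to a non-central chi-squared with two degrees of freedom and non-centrality proportional to $|\alpha|^2(\tilde{\mathbf{v}}^H\tilde{\mathbf{v}})^2/\sigma_N^2$.

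For step (ii) I would show that the quadratic form built from the sparse (banded) estimator $\hat{\mathbf{\Gamma}}$ of \eqref{gamma_hat} satisfies $\tilde{\mathbf{v}}^H\hat{\mathbf{\Gamma}}\tilde{\mathbf{v}}/\sigma_N^2 \to 1$ in probability. Because the polynomial decay of Assumption~\ref{assumption4} makes the true $\mathbf{\Gamma}$ approximately banded, the banding bias is controllable and the estimation variance of the retained near-diagonal entries vanishes as $N\to\infty$; this is the high-dimensional covariance-estimation mechanism that lets a single spatio-temporal realization substitute for many independent snapshots. Bounding the bias and variance of the quadratic form directly (rather than the full operator norm of $\hat{\mathbf{\Gamma}}-\mathbf{\Gamma}$, which suffices here because only the contraction along $\tilde{\mathbf{v}}$ matters) yields the required ratio consistency, and Slutsky's theorem then combines (i) and (ii) to give the claimed limits under both hypotheses.

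The main obstacle I anticipate is making the two limits hold simultaneously from a single realization with growing dimension $N$: the CLT needs the weighted disturbance sum to behave like a sum of a diverging number of asymptotically independent contributions, and the covariance estimate must concentrate at the same time. Both hinge entirely on the summability granted by Assumption~\ref{assumption4}, so the delicate part is verifying the mixing and Lindeberg hypotheses together with the bias--variance bound under the weakest allowable decay rate $\gamma>\rho/(\rho-1)$, and checking that the normalization by $\sigma_N^2$ keeps both the non-centrality parameter and the Lindeberg ratio on the correct scale as $N\to\infty$.
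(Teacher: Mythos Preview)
Your proposal is correct and follows essentially the same approach as the paper, which simply invokes \cite{greco2020} under Assumptions~\ref{assumption1}--\ref{assumption4}: the $\sqrt{N}$-consistency and asymptotic normality of the estimator of $\alpha$, together with consistency of the banded covariance estimate, yield the chi-squared limit --- precisely the CLT-plus-Slutsky decomposition you outline. Your sketch is in fact considerably more detailed than the paper's own proof, which defers the substantive work to the reference.
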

\begin{proof}
By Assumption~\ref{assumption4}, Assumption~\ref{assumption1}, Assumption~\ref{assumption2} and Assumption~\ref{assumption3}, Theorem~\ref{theorem1} follows directly from \cite{greco2020}. To summarize the results of \cite{greco2020}, the estimate of $\alpha$ is $\sqrt{N}$ consistent and Gaussian distributed. The detection statistic can then be written as the multiplication of two asymptotically normally distributed random variables. Thus, the detection statistic is asymptotically chi-squared distributed.
\end{proof}

From Theorem~\ref{theorem1}, we are now able to construct a detection statistic which is CFAR asymptotically as $N$ grows large. This detection statistic is CFAR under generally unknown disturbance distributions as described in Corollary~\ref{corollary1}.

\begin{corollary}\label{corollary1}
    For a disturbance process defined in \eqref{eqn:2dAR}, the detection statistic defined in \eqref{eqn:thm1} of Theorem~\ref{theorem1} is asymptotically chi-squared distributed under both $H_0$ and $H_1$.
\end{corollary}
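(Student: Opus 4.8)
The plan is to show that the disturbance process of \eqref{eqn:2dAR} falls within the hypotheses of Theorem~\ref{theorem1}, so that the conclusion of that theorem transfers immediately. The only hypothesis of Theorem~\ref{theorem1} that is not already imposed on \eqref{eqn:2dAR} by construction is the polynomial-decay condition of Assumption~\ref{assumption4}; Assumptions~\ref{assumption1}--\ref{assumption3} are structural assumptions on the target and disturbance stationarity that are taken to hold over the coherent processing window. First I would recall from Lemma~\ref{lemma1} that the vectorized autocorrelation $\tilde{\rho}[r]$ of $\tilde{\mathbf{c}} = {\rm vec}\{\mathbf{c}\}$ obeys $\tilde{\rho}[r] \leq A\beta^{|r|}$ with $\beta = (\gamma_1\gamma_2)^{1/(NK)} \in (0,1)$, i.e. it decays exponentially in the lag $r$.

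The key step is then to verify that this exponential bound implies the statement $\mathbb{E}[\tilde{c}_n \tilde{c}_{n-m}^*] = \mathcal{O}(|m|^{-\gamma})$ required by Assumption~\ref{assumption4}, where $\gamma > \rho/(\rho-1)$ for some $\rho > 1$. Because $\rho/(\rho-1) > 1$ for every $\rho > 1$ and tends to $1$ as $\rho \to \infty$, the admissible exponents are precisely those with $\gamma > 1$. I would then argue that exponential decay dominates every polynomial rate: for any fixed $\gamma > 0$ and any $\beta \in (0,1)$, $\lim_{|m|\to\infty} |m|^{\gamma}\beta^{|m|} = 0$, so $\beta^{|m|} = \mathcal{O}(|m|^{-\gamma})$. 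Applying this with any admissible $\gamma$ shows that the vectorized process inherits polynomial decay of the required order, and hence \eqref{eqn:2dAR} satisfies Assumption~\ref{assumption4}.

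With Assumption~\ref{assumption4} established and Assumptions~\ref{assumption1}--\ref{assumption3} in force, every hypothesis of Theorem~\ref{theorem1} holds for the disturbance of \eqref{eqn:2dAR}, so $\Lambda$ in \eqref{eqn:thm1} is asymptotically chi-squared under both $H_0$ and $H_1$ as $N \to \infty$, which is the claim. I do not expect a substantive obstacle, since the corollary is a specialization of Theorem~\ref{theorem1} rather than an independent result; the one point meriting care is confirming that the exponential bound of Lemma~\ref{lemma1} clears the threshold $\gamma > \rho/(\rho-1)$, which it does comfortably precisely because exponential decay is faster than any polynomial decay whatsoever.
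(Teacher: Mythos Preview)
Your proposal is correct and follows essentially the same approach as the paper: invoke Lemma~\ref{lemma1} to conclude that the vectorized 2-D AR disturbance satisfies Assumption~\ref{assumption4}, then apply Theorem~\ref{theorem1} under the remaining assumptions. The only difference is that you explicitly justify why exponential decay implies the polynomial-decay condition of Assumption~\ref{assumption4}, whereas the paper leaves this step implicit.
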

\begin{proof}
    From Lemma~\ref{lemma1}, any vectorized 2-D AR process with an associated minimum phase filter satisfies Assumption~\ref{assumption4}. Under all other assumptions, the requirements for Theorem~\ref{theorem1} are fulfilled and the result follows. 
\end{proof}

\subsubsection{Estimation of Disturbance Covariance Matrix}
We estimate the disturbance covariance matrix as in \cite{greco2020}; for some truncation lag $l$, disturbance vector of length $N$ and $\hat{c}_n = \tilde{x}_n - \hat{\alpha}\tilde{v}_n$ as

\begin{align}\label{gamma_hat}
    [\hat{\mathbf{\Gamma}}]_{i,j} = 
    \begin{cases}
        \hat{c}_i \hat{c}_j^* & 0 \leq j-i \leq l \\
        \hat{c}_i^* \hat{c}_j & 0 \leq i-j \leq l \\
        0 & |i-j| > l    
    \end{cases}.
\end{align}
\subsection{Fusion in Cognitive Radar Networks}
We now know how to perform CFAR detection for individual co-located MIMO radar in unknown disturbance processes. We explore centralized and decentralized fusion methods to determine a detection statistic which uses information from all radars in the network.
\subsubsection{Centralized Detection}
Centralized detection is characterized by each radar in the network sending an entire received signal to a central node as opposed to just some soft decision. Of course transmitting a much larger amount of data to the central node comes with challenges but also with the benefits of improved detection performance. While we provide no analysis of the challenges introduced by communicating larger data sets from individual radars to the central node, this problem is promising for future lines of work. 

We wish to combine our signals in some ``optimal" way. One natural option is maximal ratio combining. 
\begin{theorem}\label{thm:mrc}
    Consider $R$ received signals $\{\mathbf{x}^{\{i\}}\}_{i=1}^R$, such that $\mathbf{x}^{\{i\}} = \alpha^{\{i\}} \mathbf{v}^{\{i\}} + \mathbf{c}^{\{i\}}$ where $\mathbf{c}^{\{i\}}$ is zero mean random process with covariance $\Gamma^{\{i\}}$, and $\mathbf{c}^{\{i\}}$ is independent of $\mathbf{c}^{\{j\}}$ for $i\neq j$. The MRC combination is given as 
    \begin{equation}\label{eqn:centralized_signal}
        \sum_{i=1}^R\left(w^{\{i\}}\right)^*\mathbf{x}^{\{i\}},
    \end{equation}
    where $w^{\{i\}} = \alpha^{\{i\}}$.
\end{theorem}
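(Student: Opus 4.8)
The plan is to treat \eqref{eqn:centralized_signal} as a linear combiner and to choose the weights $\{w^{\{i\}}\}_{i=1}^R$ so as to maximize the output signal-to-noise ratio (SNR), which is the defining optimality criterion for maximal ratio combining. First I would substitute $\mathbf{x}^{\{i\}} = \alpha^{\{i\}}\mathbf{v}^{\{i\}} + \mathbf{c}^{\{i\}}$ into \eqref{eqn:centralized_signal} and split the result into a deterministic signal term $\mathbf{s} = \sum_{i=1}^R (w^{\{i\}})^*\alpha^{\{i\}}\mathbf{v}^{\{i\}}$ and a zero-mean disturbance term $\mathbf{n} = \sum_{i=1}^R (w^{\{i\}})^*\mathbf{c}^{\{i\}}$. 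Because the $\mathbf{c}^{\{i\}}$ are mutually independent and zero mean, the cross terms in $\mathbb{E}[\mathbf{n}\mathbf{n}^H]$ vanish and the combined disturbance covariance collapses to the weighted sum $\sum_{i=1}^R |w^{\{i\}}|^2 \Gamma^{\{i\}}$; this is where the independence hypothesis (Assumption~\ref{assumption5}) does the real work.

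Next I would form the scalar output SNR as the ratio of signal power to disturbance power and maximize it over $\mathbf{w} = [w^{\{1\}},\dots,w^{\{R\}}]^T$. Under the normalization that each branch's disturbance is whitened to a common power level, the numerator is proportional to $|\sum_{i=1}^R (w^{\{i\}})^*\alpha^{\{i\}}|^2$ and the denominator to $\sum_{i=1}^R |w^{\{i\}}|^2$. I would then invoke the Cauchy--Schwarz inequality, $|\sum_{i=1}^R (w^{\{i\}})^*\alpha^{\{i\}}|^2 \le (\sum_{i=1}^R |w^{\{i\}}|^2)(\sum_{i=1}^R |\alpha^{\{i\}}|^2)$, which holds with equality if and only if $(w^{\{i\}})^*$ is proportional to $(\alpha^{\{i\}})^*$, i.e.\ $w^{\{i\}} \propto \alpha^{\{i\}}$. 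Since the SNR is invariant under a common rescaling of all weights, the proportionality constant is immaterial and may be set to unity, giving $w^{\{i\}} = \alpha^{\{i\}}$. Equivalently, writing the SNR as a Rayleigh quotient and setting its gradient to zero yields the same stationary point.

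The main obstacle is not the optimization but pinning down a well-posed scalar SNR from vector-valued branch signals carrying distinct direction vectors $\mathbf{v}^{\{i\}}$ and distinct covariances $\Gamma^{\{i\}}$, and two simplifications must be made explicit for the clean answer to emerge. First, for the signal terms to combine coherently so that the numerator reduces to $|\sum_{i=1}^R (w^{\{i\}})^*\alpha^{\{i\}}|^2$ times a common factor, the per-branch signal contributions must be registered and normalized to the same target cell, which the assumed bijective angle-bin mapping provides. Second, the per-radar covariances must enter only through a shared scalar so that $\Gamma^{\{i\}}$ cancels from the ratio; this is consistent with the per-radar whitening already implicit in the detector of Theorem~\ref{theorem1}, whose denominator $\tilde{\mathbf{v}}^H\hat{\mathbf{\Gamma}}\tilde{\mathbf{v}}$ performs exactly this normalization. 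Under these two conditions the Cauchy--Schwarz bound is tight at $w^{\{i\}} \propto \alpha^{\{i\}}$. I would finally flag that, as with every MRC statement, the weights are determined only up to a common multiplicative constant, so ``$w^{\{i\}} = \alpha^{\{i\}}$'' is to be read as proportionality with the constant absorbed without loss of generality.
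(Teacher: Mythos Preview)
Your approach is essentially the paper's: form the output SNR, use independence of the $\mathbf{c}^{\{i\}}$ to collapse the denominator to a weighted sum of per-radar noise powers, then apply Cauchy--Schwarz to identify the optimizing weights. The one structural difference is how the vector-to-scalar reduction is handled. You invoke per-branch whitening and angle-bin registration to strip out $\mathbf{v}^{\{i\}}$ and $\Gamma^{\{i\}}$ before applying Cauchy--Schwarz, whereas the paper simply fixes a single time index $j$ and works with the \emph{instantaneous} SNR, so that the signal entries $v^{\{i\}}$ and noise variances $[\Gamma^{\{i\}}]_{jj}$ are scalars from the outset and remain in the bound. The paper's route is shorter and avoids the normalization discussion entirely; your route is more explicit about the assumptions needed for the clean answer $w^{\{i\}}=\alpha^{\{i\}}$ (rather than $w^{\{i\}}\propto \alpha^{\{i\}} v^{\{i\}}/[\Gamma^{\{i\}}]_{jj}$) to actually hold, and your closing remark about proportionality is well placed.
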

\begin{proof}
    Consider the instantaneous SNR at time index $j$, where we drop indices for convenience, as 
    \begin{equation}
        \frac{\mathbb{E}\left[\left\lvert \sum_{i=1}^R\left(w^{\{i\}}\right)^*\alpha^{\{i\}} v^{\{i\}} \right\rvert^2\right]}{\mathbb{E}\left[\left\lvert \sum_{i=1}^R \left(w^{\{i\}}\right)^* c^{\{i\}} \right\rvert^2\right]} = \frac{\left\lvert \sum_{i=1}^R\left(w^{\{i\}}\right)^*\alpha^{\{i\}} v^{\{i\}} \right\rvert^2}{\sum_{i=1}^R \left\lvert w^{\{i\}}\right\rvert^2 [\Gamma^{\{i\}}]_{jj}},
    \end{equation}
    where $[\Gamma^{\{i\}}]_{jj}$ corresponds to the $j$th diagonal entry of the covariance matrix associated with $\mathbf{c}^{\{i\}}$. By the Cauchy-Schwartz inequality, we then have 
    \begin{equation}
        \frac{\left\lvert \sum_{i=1}^R\left(w^{\{i\}}\right)^*\alpha^{\{i\}} v^{\{i\}} \right\rvert^2}{\sum_{i=1}^R \lvert w^{\{i\}}\rvert^2 [\Gamma^{\{i\}}]_{jj}} \leq \sum_{i=1}^R\frac{\lvert \alpha^{\{i\}} v^{\{i\}} \rvert^2}{[\Gamma^{\{i\}}]_{jj}},
    \end{equation}
    where the equality is met only if $w^{\{i\}} = \alpha^{\{i\}}$. 
\end{proof}
Note that each disturbance process is zero mean, and the new disturbance process is a weighted sum of the disturbance of each radar. The summation of these disturbances will have an autocorrelation function which decays exponentially, so we do not violate Assumption~\ref{assumption4} for this centralized detection method. 

\subsubsection{Decentralized Detection}\label{sec:dec_det}
For decentralized detection, each radar creates a soft detection statistic which is then sent to the central node for fusion. We organize each radar's detection statistic vector such that the statistic relating to each angle bin is organized according to the reference radar's statistic. We then take the maximum detection statistic for each bin between all the radars. The detection statistic is computed as,
\begin{equation}\label{eqn:decentralized_statistic}
    {\Lambda}_{p,l} = \max\left\{{\Lambda}_{p,l}^{\{i\}}\right\}_{i=1}^{R}.
\end{equation}

Now the family of chi-squared distributions is certainly not closed under the $\max(\cdot)$ operator, so we must compute the distribution of the new statistic for determining the CFAR threshold. 

\begin{theorem}\label{thm:PFA}
The probability of false alarm for decentralized detection using $\max$ fusion is given by 
\begin{equation}
    P_{\rm FA} = \int_{\lambda}^{\infty}R\left(F_{\Lambda^{\{i\}}\rvert \mathcal{H}_0}(x)\right)^{R-1}f_{\Lambda^{\{i\}}\rvert \mathcal{H}_0}(x){\rm d}x,
\end{equation}
where $f_{\Lambda^i\rvert \mathcal{H}_0}(x)$ is the pdf of the $i$th radar's detection statistic under the null hypothesis, and $R$ is the number of radars. 
\end{theorem}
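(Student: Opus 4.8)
The plan is to recognize $P_{\rm FA}$ as a tail probability of the maximum (an extreme order statistic) of the $R$ per-radar statistics, and then to apply the standard order-statistic machinery: write the CDF of the maximum as a product, differentiate to obtain its density, and integrate above the threshold. The whole argument rests on establishing that, under $\mathcal{H}_0$, the collection $\{\Lambda^{\{i\}}\}_{i=1}^R$ is independent and identically distributed, after which the formula is essentially a one-line computation.

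First I would establish independence. Each individual statistic $\Lambda^{\{i\}}$ in \eqref{eqn:thm1} is a deterministic function of the $i$th radar's received signal $\tilde{\mathbf{x}}^{\{i\}}$, and under $\mathcal{H}_0$ (no target present) this signal reduces to the disturbance $\tilde{\mathbf{c}}^{\{i\}}$ alone. Since Assumption~\ref{assumption5} guarantees that the disturbance process at each radar is independent of the disturbance at every other radar, the random variables $\{\Lambda^{\{i\}}\}_{i=1}^R$ are mutually independent under $\mathcal{H}_0$. Next I would argue they are identically distributed: by Theorem~\ref{theorem1}, each $\Lambda^{\{i\}}$ is asymptotically chi-squared under $\mathcal{H}_0$, and the limiting law (a chi-squared with the same number of degrees of freedom) does not depend on the radar index. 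This justifies writing a single common CDF $F_{\Lambda^{\{i\}}\mid\mathcal{H}_0}$ and density $f_{\Lambda^{\{i\}}\mid\mathcal{H}_0}$ for all $i$.

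With the i.i.d. property in hand, I would compute the CDF of the fused statistic directly from \eqref{eqn:decentralized_statistic}. For any $x$,
\begin{equation*}
    F_{\Lambda\mid\mathcal{H}_0}(x) = \Pr\!\left(\max_i \Lambda^{\{i\}} \le x \,\middle|\, \mathcal{H}_0\right) = \prod_{i=1}^R \Pr\!\left(\Lambda^{\{i\}} \le x \,\middle|\, \mathcal{H}_0\right) = \left(F_{\Lambda^{\{i\}}\mid\mathcal{H}_0}(x)\right)^{R},
\end{equation*}
where the middle equality uses independence and the last uses the common marginal law. Differentiating via the chain rule gives the density of the maximum,
\begin{equation*}
    f_{\Lambda\mid\mathcal{H}_0}(x) = R\left(F_{\Lambda^{\{i\}}\mid\mathcal{H}_0}(x)\right)^{R-1} f_{\Lambda^{\{i\}}\mid\mathcal{H}_0}(x).
\end{equation*}
Finally, integrating this density above the detection threshold $\lambda$ yields $P_{\rm FA} = \Pr(\Lambda \ge \lambda \mid \mathcal{H}_0) = \int_\lambda^\infty f_{\Lambda\mid\mathcal{H}_0}(x)\,{\rm d}x$, which is exactly the claimed expression.

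I expect the main obstacle to be the identically-distributed claim rather than the combinatorial step. One must be careful that the asymptotic chi-squared law from Theorem~\ref{theorem1} is genuinely the same across radars, i.e.\ that every radar realizes the same limiting degrees of freedom so that a single $f_{\Lambda^{\{i\}}\mid\mathcal{H}_0}$ suffices; this is what makes the product collapse to a power. I would also note explicitly that the equality holds in the asymptotic ($N\to\infty$) regime in which each $\Lambda^{\{i\}}$ attains its chi-squared limit, consistent with the asymptotic CFAR nature of the underlying per-radar statistic.
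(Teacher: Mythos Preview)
Your proposal is correct and follows essentially the same route as the paper: both arguments identify $\Lambda$ as the largest order statistic of an i.i.d.\ sample and integrate its density above $\lambda$. The paper simply quotes the general $k$th order-statistic density and specializes to $k=n=R$, whereas you derive the density of the maximum directly from the product CDF; the content is identical, and your added justification of the i.i.d.\ hypothesis via Assumption~\ref{assumption5} and Theorem~\ref{theorem1} is a welcome clarification that the paper leaves implicit.
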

\begin{proof}
    For a set of $n$ i.i.d. random variables with pdf $f(\cdot)$ and cdf $F(\cdot)$, the pdf of the $k$th order statistic is given by 
    \begin{equation}
        \frac{n!}{(k-1)!(n-k)!}(F(x))^{(n-1)}(1-F(x))^{(n-k)}f(x).
    \end{equation}
    Plugging in the pdf of each radar's detection statistic under the null hypothesis and replacing $n$ by $R$ and $k$ by $R$, the result is straightforward. 
\end{proof}

Because we also use the estimated probability of detection in the RL algorithm for detection, we must consider this new function. 
\begin{theorem}\label{thm:Pd_decentralized}
    The probability of detection for decentralized detection using $\max$ fusion is given by 
    \begin{equation}
        P_{\rm D} = 1-\prod_{i=1}^{R}\left[1-Q_1\left(\sqrt{\Lambda^{\{i\}}}, \sqrt\lambda\right)\right],
    \end{equation}
    where $Q_1(\cdot, \cdot)$ is the Marcum Q function of the first order.
\end{theorem}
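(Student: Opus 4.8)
The plan is to reuse the max-fusion structure from the proof of Theorem~\ref{thm:PFA}, but now evaluate everything under $H_1$ and express the answer through the per-radar detection probabilities rather than order-statistic densities. By \eqref{eqn:decentralized_statistic} the fused statistic declares a detection in bin $l$ exactly when $\max_i \Lambda^{\{i\}} \geq \lambda$, so the detection event is the union $\bigcup_{i=1}^R \{\Lambda^{\{i\}} \geq \lambda\}$. First I would pass to the complement and write $P_{\rm D} = 1 - P\!\left(\max_i \Lambda^{\{i\}} < \lambda \mid H_1\right) = 1 - P\!\left(\bigcap_{i=1}^R \{\Lambda^{\{i\}} < \lambda\} \mid H_1\right)$, using the fact that the underlying distributions are continuous so the boundary $\{\max_i \Lambda^{\{i\}} = \lambda\}$ has zero measure.

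Next I would invoke Assumption~\ref{assumption5}: the disturbance process at each radar is independent of those at all other radars. Since each $\Lambda^{\{i\}}$ is a deterministic function of only the $i$th radar's received signal $\tilde{\mathbf{x}}^{\{i\}}$, the statistics $\{\Lambda^{\{i\}}\}_{i=1}^R$ are mutually independent. This factors the joint probability into a product, $P\!\left(\bigcap_{i=1}^R \{\Lambda^{\{i\}} < \lambda\} \mid H_1\right) = \prod_{i=1}^R P\!\left(\Lambda^{\{i\}} < \lambda \mid H_1\right) = \prod_{i=1}^R \left[1 - P\!\left(\Lambda^{\{i\}} \geq \lambda \mid H_1\right)\right]$. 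Unlike the false-alarm case, where all radars share a common null distribution and the product collapses to a power, here the per-radar $H_1$ distributions differ through their individual SNRs, so the product must be retained term by term.

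The remaining step is to identify each single-radar detection probability with the first-order Marcum Q function. By Theorem~\ref{theorem1} each $\Lambda^{\{i\}}$ is asymptotically chi-squared; under $H_1$ it is a non-central chi-squared with two degrees of freedom, whose complementary CDF evaluated at the threshold is exactly $Q_1\!\left(\sqrt{\delta_i}, \sqrt{\lambda}\right)$, where $\delta_i$ is the non-centrality parameter. Consistent with the estimated-$P_{\rm D}$ usage driving the reward in \eqref{eqn:reward}, the non-centrality is estimated by the observed statistic value, i.e. $\delta_i = \Lambda^{\{i\}}$, so that $P\!\left(\Lambda^{\{i\}} \geq \lambda \mid H_1\right) = Q_1\!\left(\sqrt{\Lambda^{\{i\}}}, \sqrt{\lambda}\right)$. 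Substituting this into the product yields the claimed expression.

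I expect the main obstacle to be the second and third steps rather than the algebra: one must justify carefully that independence of the raw disturbances $\tilde{\mathbf{c}}^{\{i\}}$ transfers to independence of the nonlinear statistics $\Lambda^{\{i\}}$ (clean once one observes that $\Lambda^{\{i\}}$ depends only on radar $i$'s data), and one must pin down precisely which quantity plays the role of the Marcum non-centrality parameter so that the first argument is $\sqrt{\Lambda^{\{i\}}}$ and not a separate SNR-based term. The latter is the genuinely delicate point, since it rests on the two-degree-of-freedom non-central chi-squared tail being definitionally the first-order Marcum Q function together with the convention that the observed statistic estimates the non-centrality.
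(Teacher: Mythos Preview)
Your proposal is correct and follows essentially the same approach as the paper: write $P_{\rm D}$ as the complementary probability that the max is below threshold, factor by independence (Assumption~\ref{assumption5}), and identify each per-radar tail with $Q_1(\sqrt{\Lambda^{\{i\}}},\sqrt{\lambda})$. If anything, your treatment is more careful than the paper's, which simply asserts the Marcum-$Q$ identification and the factorization without explicitly citing the independence assumption or discussing the non-centrality convention you flag.
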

\begin{proof}
    The probability of detection for a single radar is given by $Q_1(\sqrt{\Lambda}, \sqrt{\lambda})$ which is the complementary CDF of the non-central chi-squared distribution \cite{nuttall1975}. By definition, this is $\mathbb{P}[\Lambda \geq \lambda]$; however, we consider 
    \begin{align}
        P_{\rm D} &= \mathbb{P}[\max\{\Lambda^{\{1\}}, \dots, \Lambda^{\{R\}}\} \geq \lambda] \nonumber \\
        &= 1-\mathbb{P}[\Lambda^{\{1\}} < \lambda] \cdot {\dots} \cdot \mathbb{P}[\Lambda^{\{R\}} < \lambda] \nonumber \\
        &= 1-\prod_{i=1}^R\left[1-Q_1\left(\sqrt{\Lambda^{\{i\}}}, \sqrt\lambda\right)\right],
    \end{align}
    where the last step follows from the definition of $Q_1(\cdot,\cdot)$.
\end{proof}

\section{Simulation Results}
We present simulation results for co-located MIMO radar with various detection algorithms in an unknown disturbance distribution for different dynamic scenarios. We then provide performance of CRNs for both decentralized and centralized detection and compare the two scenarios. 

\subsection{Formulation of Disturbance Process}
We choose an AR(6,6) process of the form outlined in Lemma~\ref{lemma1} with complex t-distributed innovations defined in \cite{Ollila2012}, \cite{greco2020} and correlation parameters given in \cite{greco2020}. 
\subsection{Simulation Parameters}
We use the following parameters described by table~\ref{tab:radarparams} in all simulations discussed. By choosing 20 angle bins and 10 transmit antennas, the beampattern will be 3dB down in adjacent bins. Specifically, the 3dB beamwidth is approximately $1/N_T$ in the spatial frequency domain, given half wavelength antenna spacing. Furthermore, the search space of 180 degrees is partitioned into 20 angle bins. If the number of angle bins were increased, then the beamwidth would leak into adjacent bins. This isn't necessarily an issue as on receive, the spatial filter will allow us to create a detection statistic for our intended angle bin; however, with increased number of angle bins, the RL algorithms will have to search many more bins and convergence will take longer on average. If we decrease the total number of angle bins and the total number of targets in the search space is larger than the total number of angle bins, then all targets will not be able to be detected; furthermore, if the angle bins are widened such that the 3dB beamwidth within one angle bin only covers a small portion, then there could be an issue of missed targets when targets are not located in the center of the angle bins. The number of transmit antennas and total angle bins should be chosen such that the search space is discretized to detect a sufficient number of targets at once while maintaining a 3dB beamwidth that at least covers the intended bins. 

\begin{table}[H]
\caption{Radar Specifications}
\label{tab:radarparams}
\centering
\begin{tabular}{||c c||} 
 \hline
 Parameter & MIMO  \\ [0.5ex] 
 \hline\hline
 ${\rm N_T}$ & 10 \\
\hline
 ${\rm M_R}$ & 10\\ 
 \hline
 Pulses per CPI & 100 \\ 
 \hline
Angle bins $L$ & 20 \\
\hline
 $\kappa$ & 0.8 \\ 
 \hline
 Monte-Carlo Trials & 400 \\ 
 \hline
 $P_{\rm FA}$ & $1\times10^{-4}$ \\
 \hline
 PRI & 5 $\mu$s \\
 \hline

\end{tabular}
\end{table}

\subsection{Detection Algorithms}
We consider five detection algorithms for comparison. The first is the ``Optimal" algorithm. In this case, we consider that the radar has perfect knowledge of where all targets are located at all discrete time steps. For each pulse, the radar points its beams into the bins where the targets are known to be located. Note that in this scenario, the detection probability is not always unity as multiple targets with lower SNR values can cause detection performance to degrade, even when beams are pointed only into bins where targets are located. 

Next, we consider RL based algorithms. Because we do not consider the disturbance process to be known \textit{a priori}, we use a model free RL algorithm. We use the on-learning ``SARSA" algorithm developed in \cite{greco2022} as non-optimal actions are punished more compared to off-learning. We also show the Bayesian assisted SARSA algorithm developed in \cite{wang2024} which is denoted as ``Bayes". 

We then consider two non-RL based algorithms. These are termed ``Adaptive" and ``Orthogonal". In the Adaptive algorithm, the radar points its beams into the bins where a detection was last made. If no detections are made, it defaults to the Orthogonal algorithm. The Orthogonal algorithm puts equal power into all angle bins. 

Finally, we consider an ``Scanning" algorithm which points power into a single bin per CPI and sweeps through each bin meaning that during the next CPI, the algorithm points power into the adjacent bin. We consider this algorithm only for the first two scenarios as we highlight the RL-based algorithm's superiority. 

\subsection{Spatial and Temporal Domain Trade-offs}
    We will discuss the differences between the extreme case of MMIMO and a single PRI versus many PRI with much fewer antennas. In the MMIMO regime, there are many more antennas which require more RF chains; as such, the receive gain is much greater in this regime. When the total transmit power across one CPI is the same in the MMIMO and MIMO regimes, performance will be better in the MMIMO regime due to the additional receive gain; however, if the transmit power per pulse is equal in both regimes, then performance in the MIMO regime will be superior due to larger effective transmit power due to additional pulses. If transmit power is scaled such that the norm of \eqref{rxvec2} is equivalent in both regimes, then of course performance will also be equivalent. 

\subsection{Co-located MIMO Radar Simulation Scenarios}
We now outline the two different dynamic simulation scenarios which are used to analyze co-located MIMO radar detection performance in arbitrarily distributed disturbances. For the entirety of this section, one discrete time instance is equivalent to one CPI. 

\subsubsection{Scenario 1}
Consider 400 discrete time steps which corresponds to 2 seconds. Two targets begin in the scene in angle bins 5 and 13 with SNR -18 and -21 dB, respectively. At time instant 100, the target in bin 5 disappears, but the target in bin 13 remains with the same SNR. At time 200, a target appears in bin 17 with SNR -20 dB, and the target in bin 13 remains with SNR -21 dB. Finally, at time 300, the target in bin 13 disappears, and the target in bin 17 remains with SNR -20 dB. 

We can see that the orthogonal and adaptive algorithms have poor detection performance for all three targets in Fig.~\ref{fig:s1:T1}, Fig.~\ref{fig:s1:T2}, and Fig.~\ref{fig:s1:T3}. In fact, the orthogonal algorithm fails to detect any targets and the adaptive algorithm fails to get above 50\% detection performance. This is due to the fact that the adaptive algorithm begins with an orthogonal waveform, and when the target SNR is too low, there will never be a detection. Furthermore, if the adaptive algorithm were to begin by pointing its power into only one or a few directions, then any targets outside of those directions will not be detected. The final non-RL based algorithm --the Scanning algorithm-- has excellent detection performance when it points in the direction of the target. 

In contrast, all of the RL based algorithms have excellent detection performance. When a new target is introduced, the RL algorithms adjust with performance around 90\% after 20 CPIs. If one were to scan the search area, it would take 20 CPIs to scan all angle bins, but this comes with the draw back of not being able to detect multiple targets in one CPI. As a metric for comparison, we can measure the probability of target acquisition, where acquisition is defined as a target being detected $M$ out of $N$ times. 

The probability of acquisition will depend on the beamwidth and the choice of $M$ and $N$. If we require $3$ detections before acquisition is declared, then the Scanning algorithm will need $60$ CPIs if the transmit beamwidth is constrained to a single bin; however, if the transmit beamwidth leaks into adjacent bins, then the Scanning algorithm could make $3$ detections in consecutive CPIs.
Specifically, we require that a every $5$ CPI, if a target is detected $3$ times, then it is deemed acquired. Because the transmit beam has its $3$dB down point located in the center of adjacent bins, the Scanning algorithm is essentially able to detect $3$ targets at once: those in the bin its pointing into and those in adjacent bins. We can see some of the benefits of an Scanning algorithm when the RL based algorithms are slower to converge in Fig.~\ref{fig:Pa:T3}. Note that if the requirement for $M$ is larger than 3, then the beamwidth would have to be much wider on the Scanning algorithm to meet the detection requirements in less than $20$ CPIs where 20 is the number of angle bins; this would lead to the issue of lowering overall power into the desired bin, and as $M$ approaches $20$, the beam would approach the orthogonal algorithm which has poor detection performance. The RL based algorithms do not suffer as $M$ increases for some fixed $N$, however. 

The probability of false alarm performance is as expected for all algorithms, and a more detailed discussion is presented in Table~\ref{tab:PFA}. 

\begin{figure}[H]
    \centering
    \includegraphics[width=0.95\linewidth]{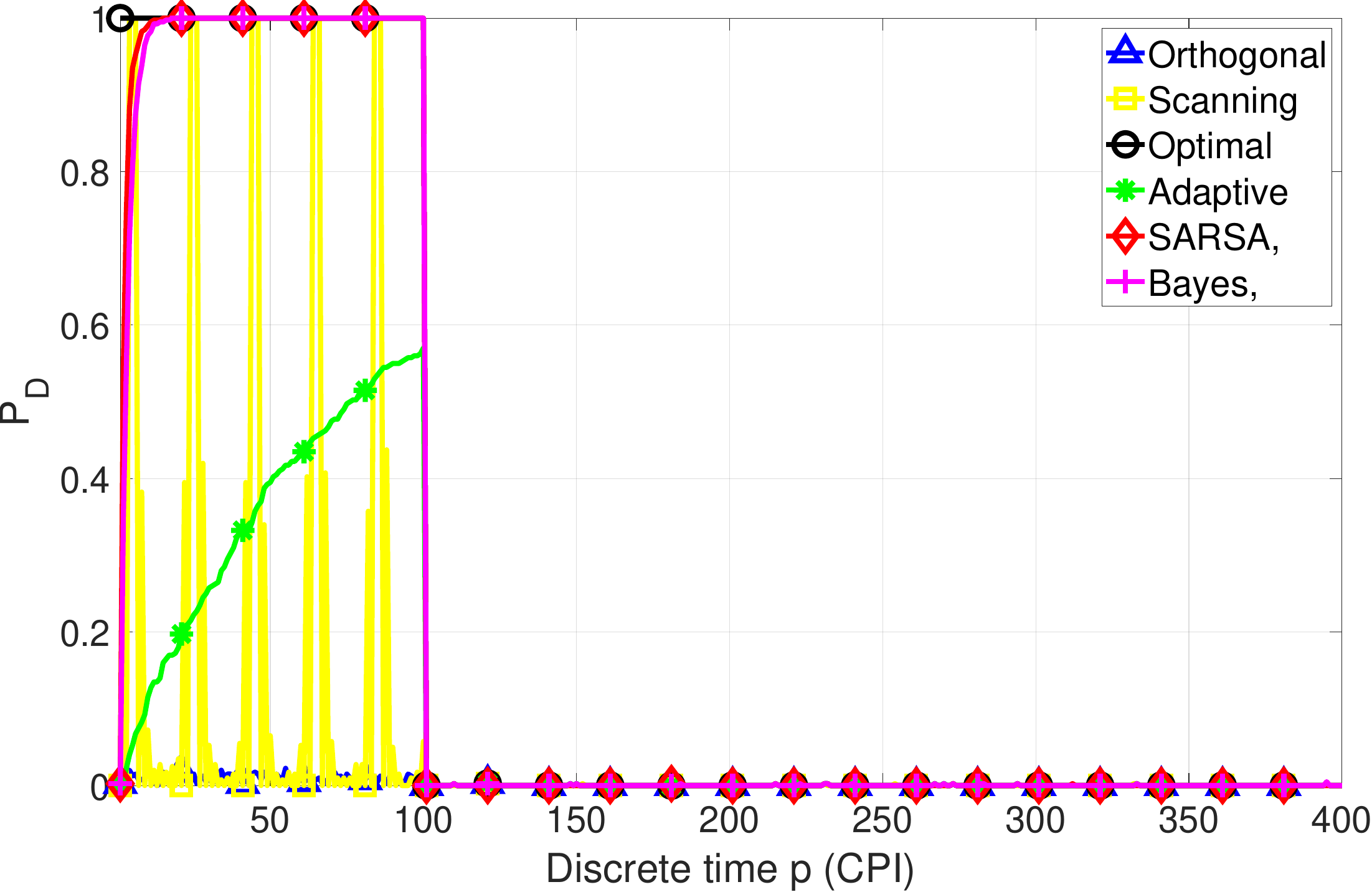}
    \caption{Scenario 1: MIMO (${\rm M_RN_T = 100}$) Detection Performance for the Target in Bin 5}
    \label{fig:s1:T1}
\end{figure}
\begin{figure}[H]
    \centering
    \includegraphics[width=0.95\linewidth]{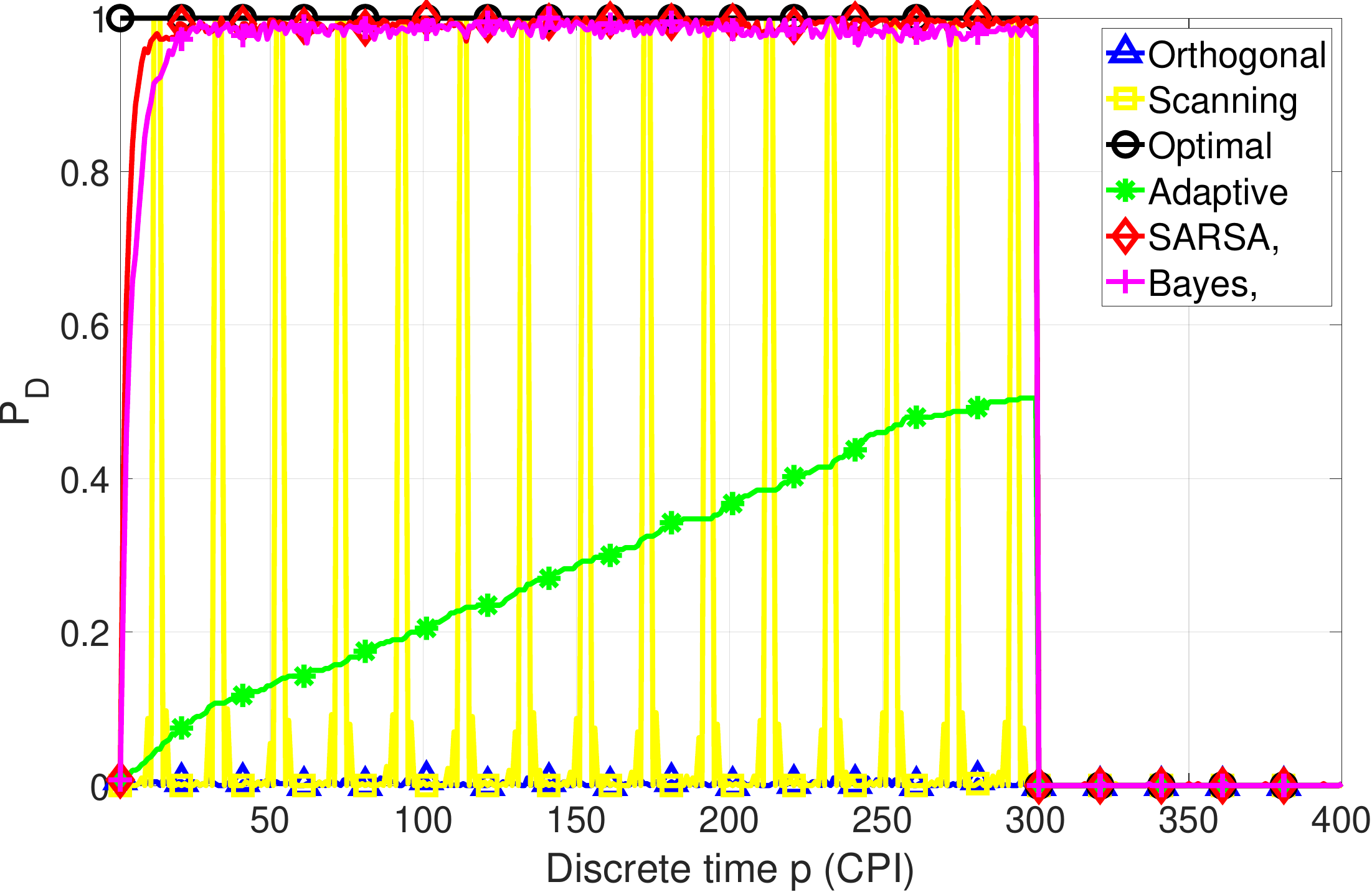}
    \caption{Scenario 1: MIMO (${\rm M_RN_T = 100}$) Detection Performance for the Target in Bin 13}
    \label{fig:s1:T2}
\end{figure}
\begin{figure}[H]
    \centering
    \includegraphics[width=0.95\linewidth]{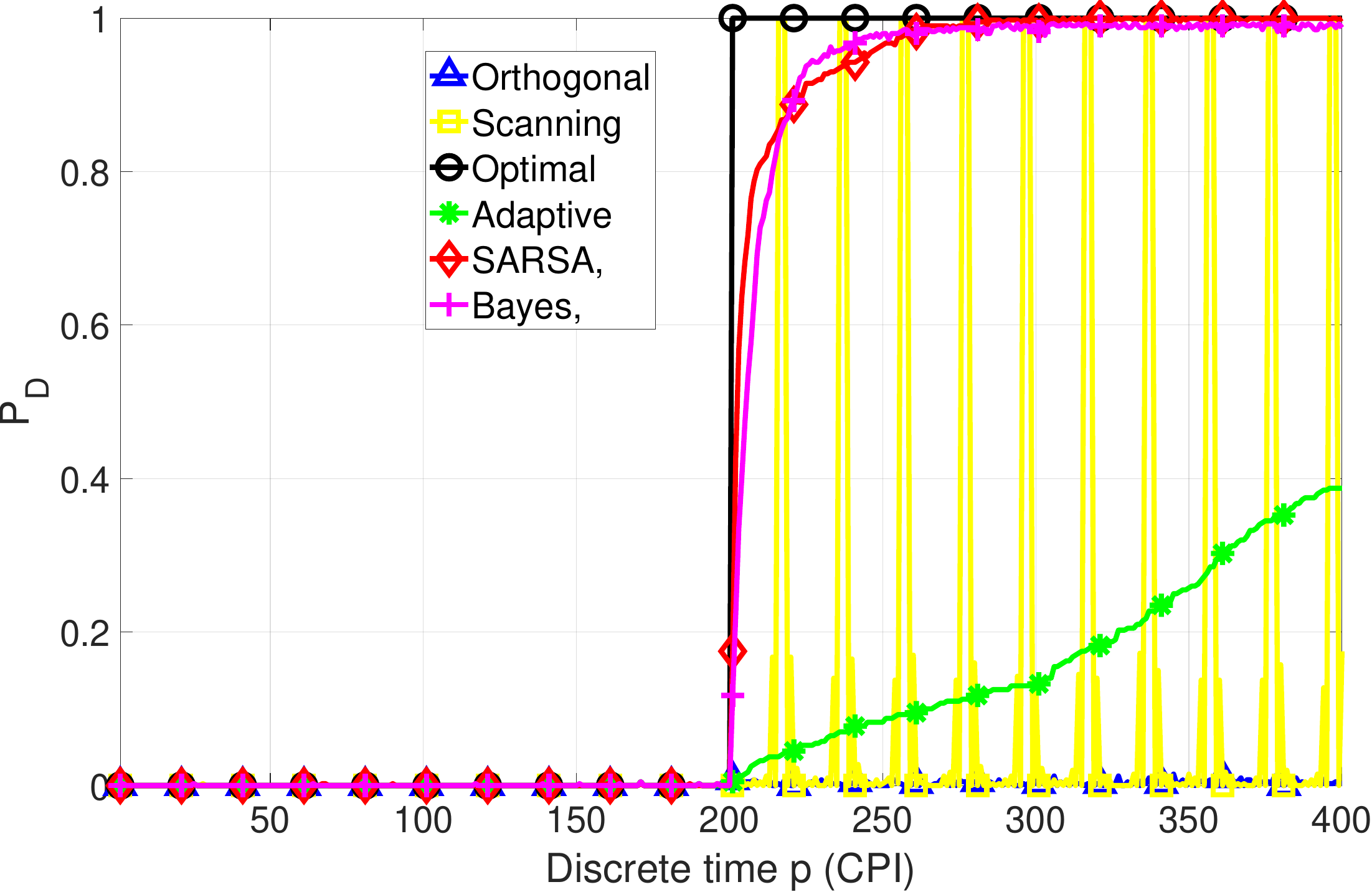}
    \caption{Scenario 1: MIMO (${\rm M_RN_T = 100}$) Detection Performance for the Target in Bin 17}
    \label{fig:s1:T3}
\end{figure}

\begin{figure}[H]
    \centering
    \includegraphics[width=0.95\linewidth]{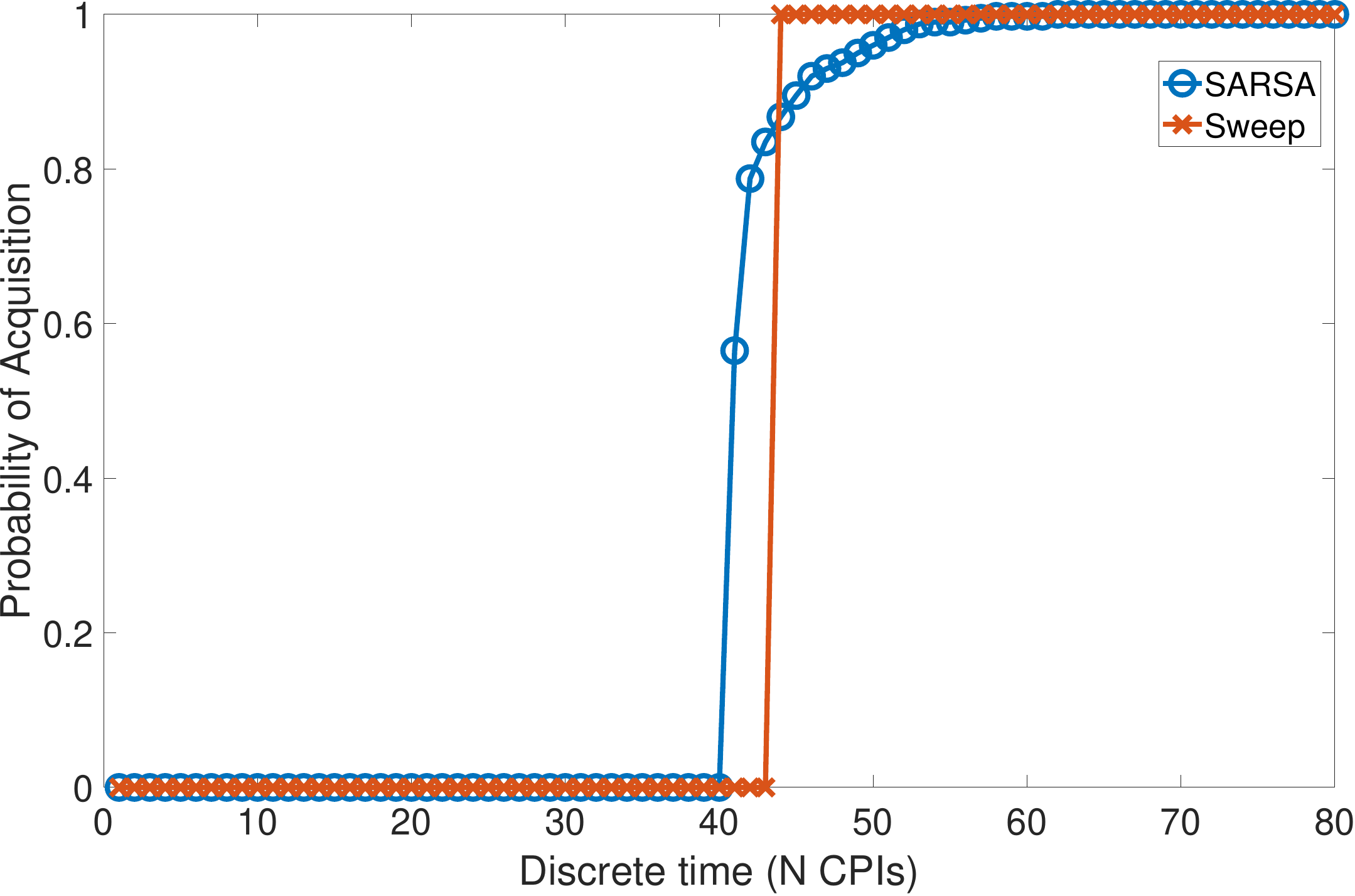}
    \caption{Scenario 1: MIMO (${\rm M_RN_T = 100}$) Probability of Acquisition for the Target in Bin 17}
    \label{fig:Pa:T3}
\end{figure}

\subsubsection{Scenario 2}
Now consider a simulation which lasts for 200 discrete time steps. There is a target in angle bin 7 and 16. Both of the targets follow an SNR curve described by Fig.~\ref{fig:s2:SNR}. 
\begin{figure}[H]
    \centering
    \includegraphics[width=0.95\linewidth]{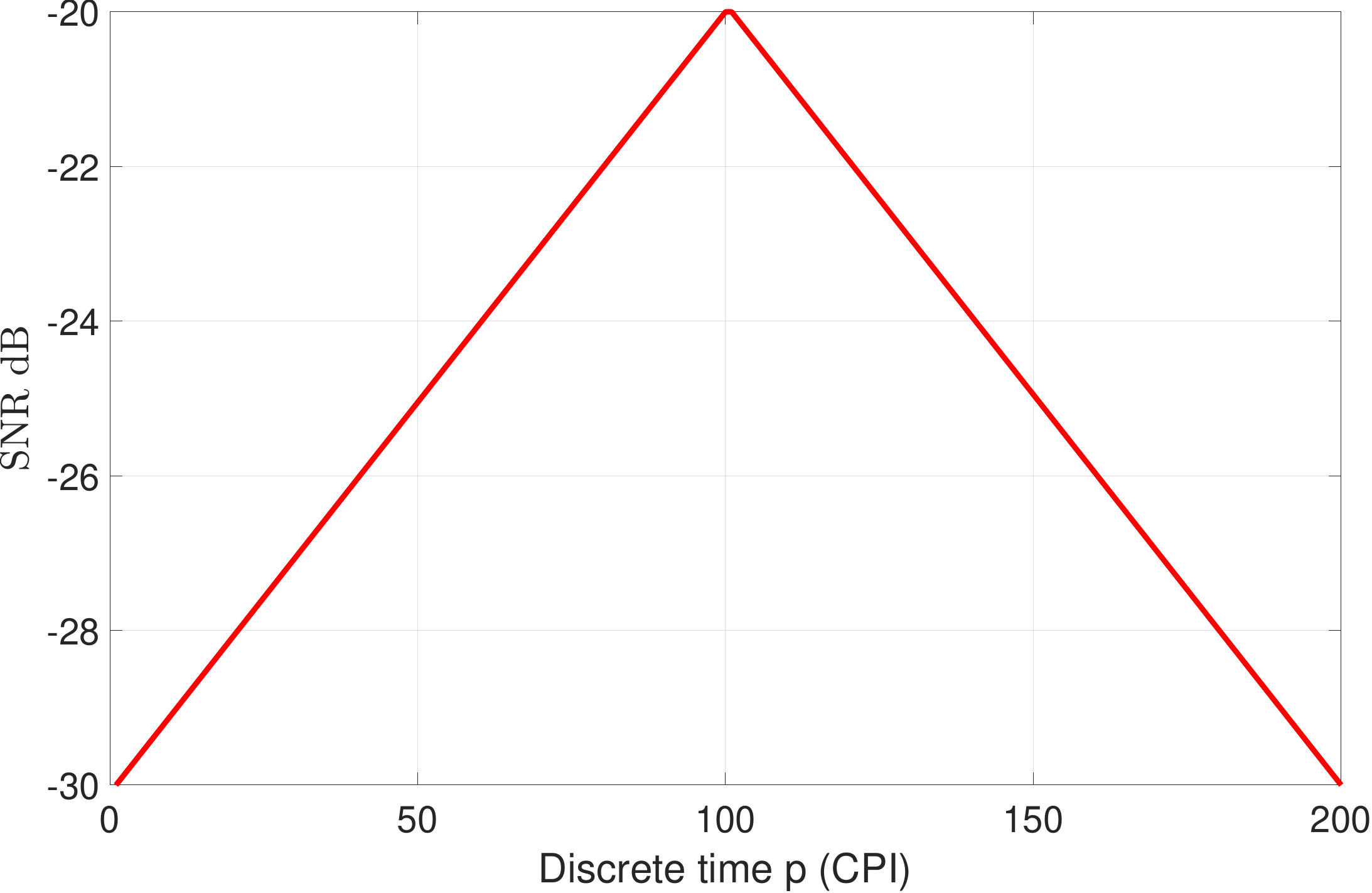}
    \caption{SNR Seen By Targets in Scenario 2}
    \label{fig:s2:SNR}
\end{figure}

We can see from Fig.~\ref{fig:s2:MIMOT1} and Fig.~\ref{fig:s2:MIMOT2} that detection performance follows the SNR curve in Fig.~\ref{fig:s2:SNR}. We can also see that detection performance for both targets is jointly optimized, where if detection performance were to increase for one target, then it would decrease for the other. Again the RL based algorithms significantly outperform the adaptive and orthogonal algorithms. While it seems like the Scanning algorithm outperforms the RL based algorithms while SNR is low, we can look at the probability of acquisition shown in Fig.~\ref{fig:Pa:T1} where we can see that the SARSA algorithm has a higher probability of acquisition for $M=3$ and $N=5$.

We also present the probability of false alarm for this scenario in Table~\ref{tab:PFA} and see all the values are approximately the set value of $1\times10^{-4}$. Note that because the disturbance distribution is the same as in Scenario 1 and because the thresholds are not changing, the probability of false alarm performance is equivalent between the two scenarios. 

\begin{table}[H]
\caption{Measured Probability of False Alarm for Scenario 2}
\label{tab:PFA}
\centering
\begin{tabular}{||c c||} 
 \hline
 Algorithm & $P_{\rm FA}$  \\ [0.5ex] 
 \hline\hline
 Orthogonal & $1.0064 \times 10^{-4}$ \\
\hline
 Optimal & $1 \times10^{-4}$\\ 
 \hline
 Adaptive & $9.5139 \times 10^{-5}$ \\ 
 \hline
SARSA & $9.8611 \times 10^{-5}$ \\
\hline
 Bayes & $9.7222 \times10^{-5}$ \\ 
 \hline

\end{tabular}
\end{table}

\begin{figure}[H]
    \centering
    \includegraphics[width=0.95\linewidth]{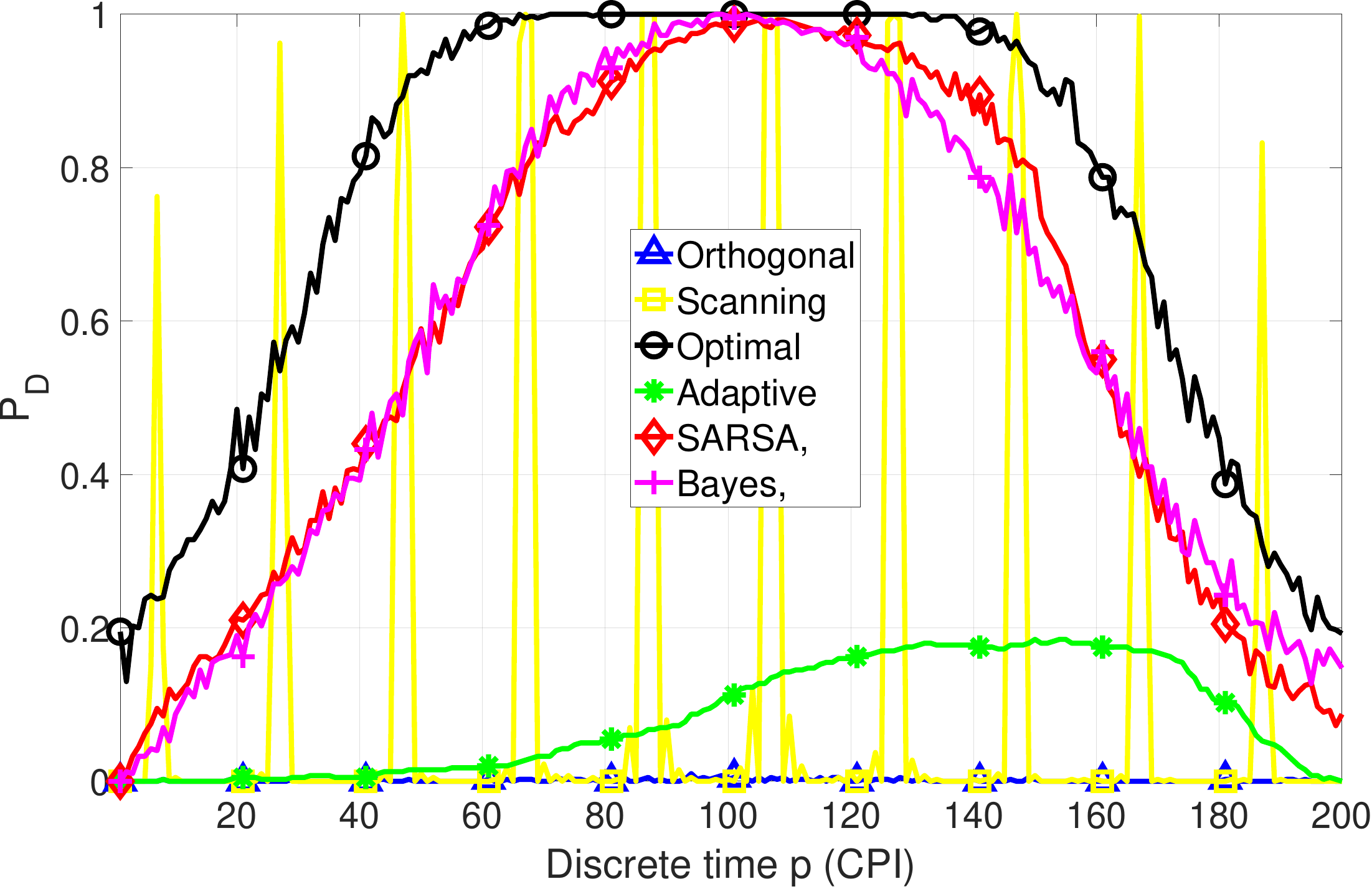}
    \caption{Scenario 2: MIMO (${\rm M_RN_T = 100}$) Detection Performance for the Target in Bin 3}
    \label{fig:s2:MIMOT1}
\end{figure}
\begin{figure}[H]
    \centering
    \includegraphics[width=0.95\linewidth]{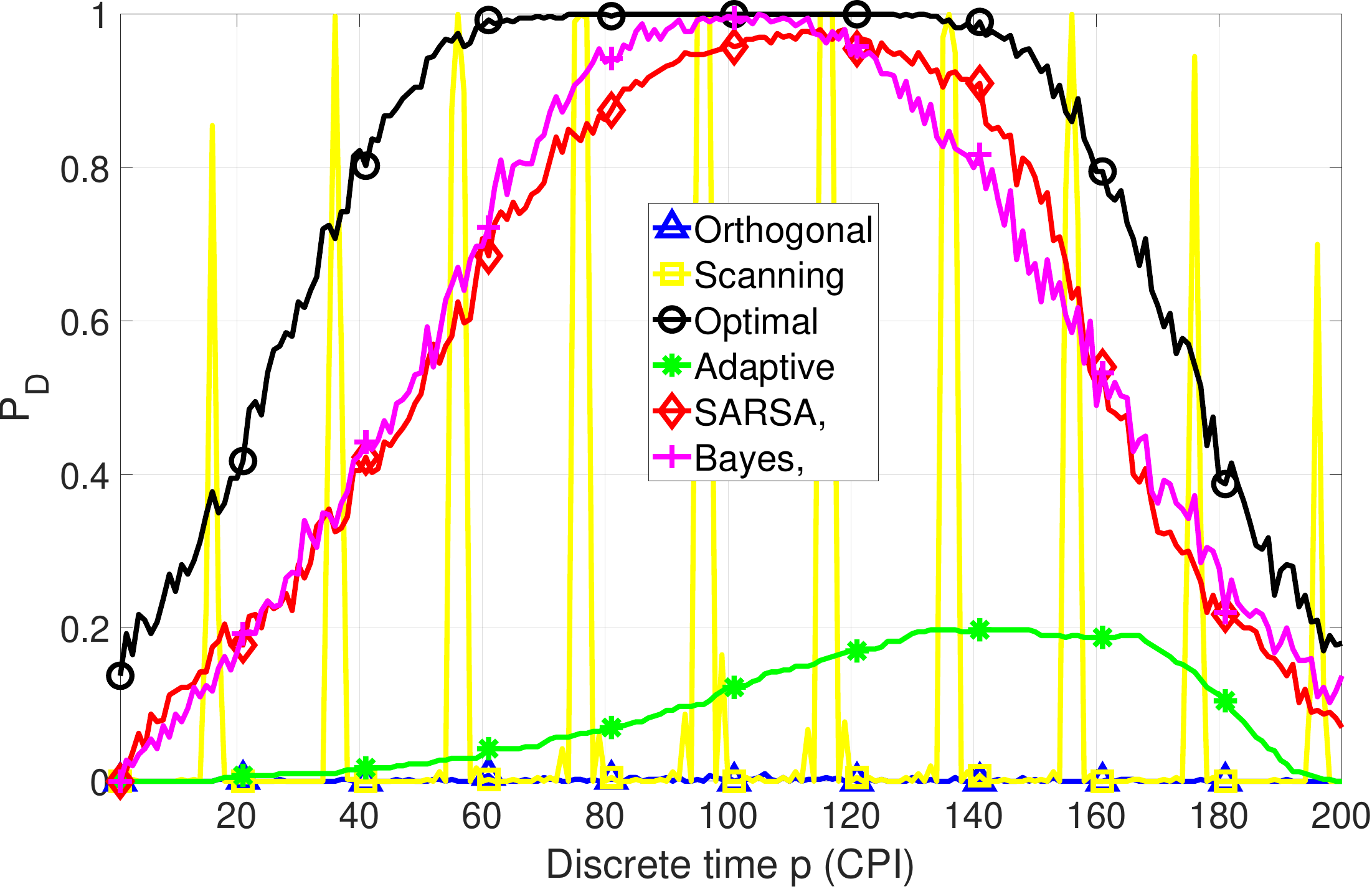}
    \caption{Scenario 2: MIMO (${\rm M_RN_T = 100}$) Detection Performance for the Target in Bin 18}
    \label{fig:s2:MIMOT2}
\end{figure}
\begin{figure}[H]
    \centering
    \includegraphics[width=0.95\linewidth]{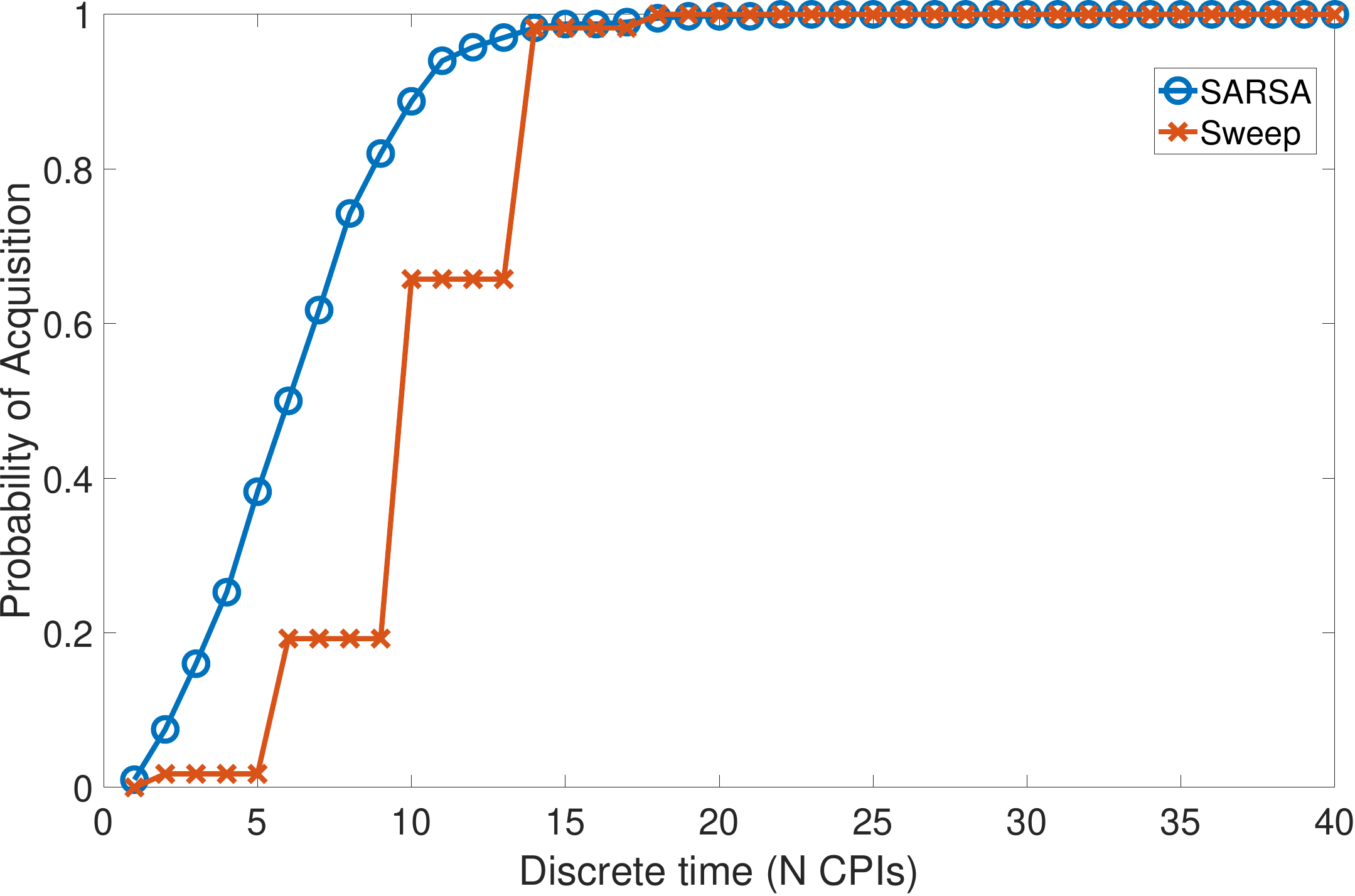}
    \caption{Scenario 2: MIMO (${\rm M_RN_T = 100}$) Probability of Acquisition for the Target in Bin 18}
    \label{fig:Pa:T1}
\end{figure}

\subsubsection{Scenario 3}
We first consider decentralized detection with two radars. There are two targets; one in bin 7, and one in bin 16. The SNR curve of each target as seen by the radars is described by Fig.~\ref{fig:s3:SNRInd}. Practically, one can think of this scenario as representing two radars which face each other where the two targets are moving back and forth between each radar within their respective angle bins.

\begin{figure}[H]
    \centering
    \includegraphics[width=0.95\linewidth]{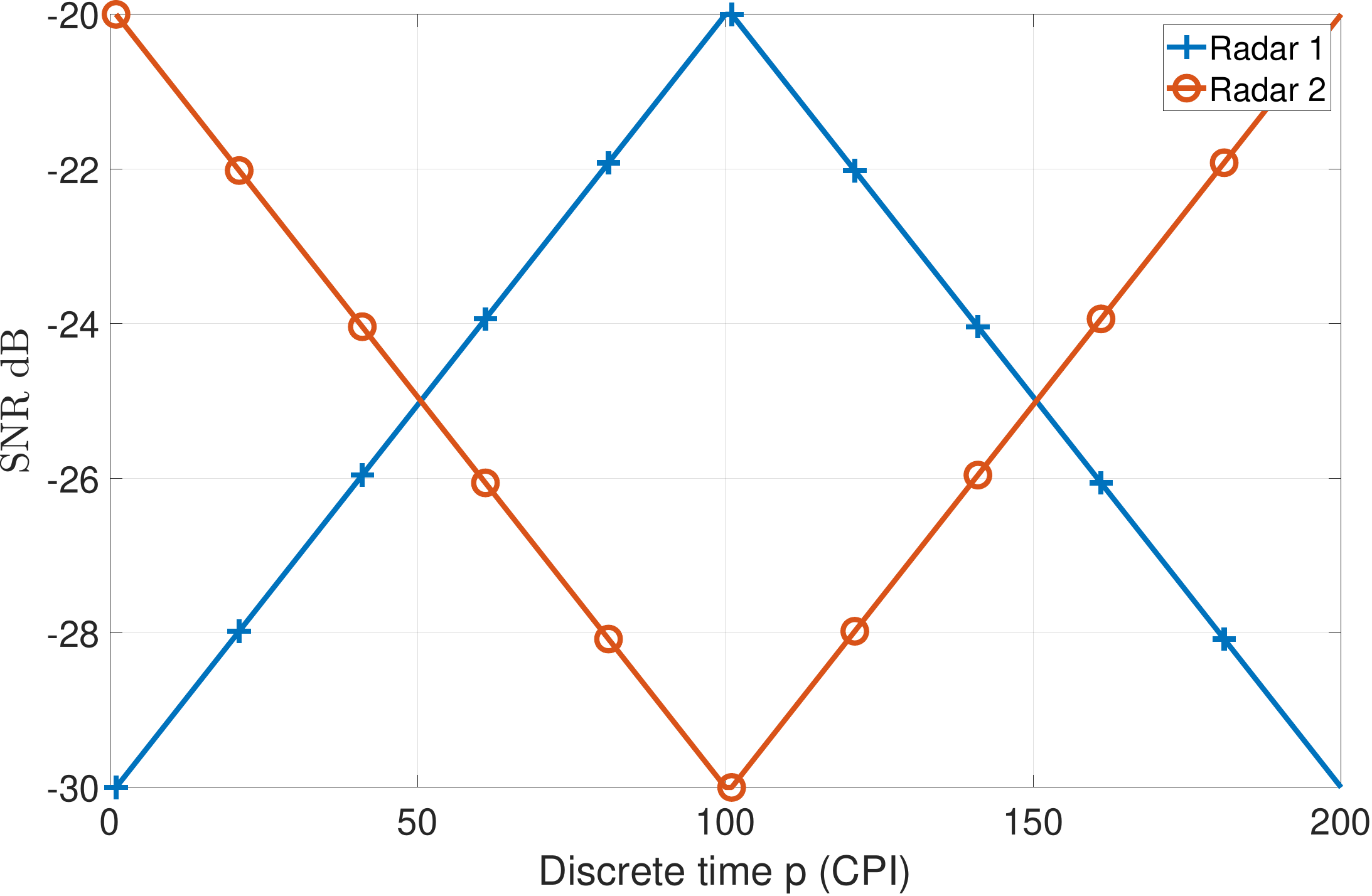}
    \caption{SNR of Target 1 and Target 2 Seen by Radar 1}
    \label{fig:s3:SNRInd}
\end{figure}

Detection performance of the system is shown for the target in angle bin 7 in Fig.~\ref{fig:s3:MIMOT1} and for the target in angle bin 16 in Fig.~\ref{fig:s3:MIMOT2}. We can immediately see that by adding just one additional radar, detection performance drastically improves from Fig.~\ref{fig:s2:MIMOT1} and Fig.~\ref{fig:s2:MIMOT2}. This can be expected as spatial diversity provides an SNR improvement; furthermore, this is a realistic scenario as almost surely all radars in the network will not see the same target SNR.

\begin{figure}[H]
    \centering
    \includegraphics[width=0.95\linewidth]{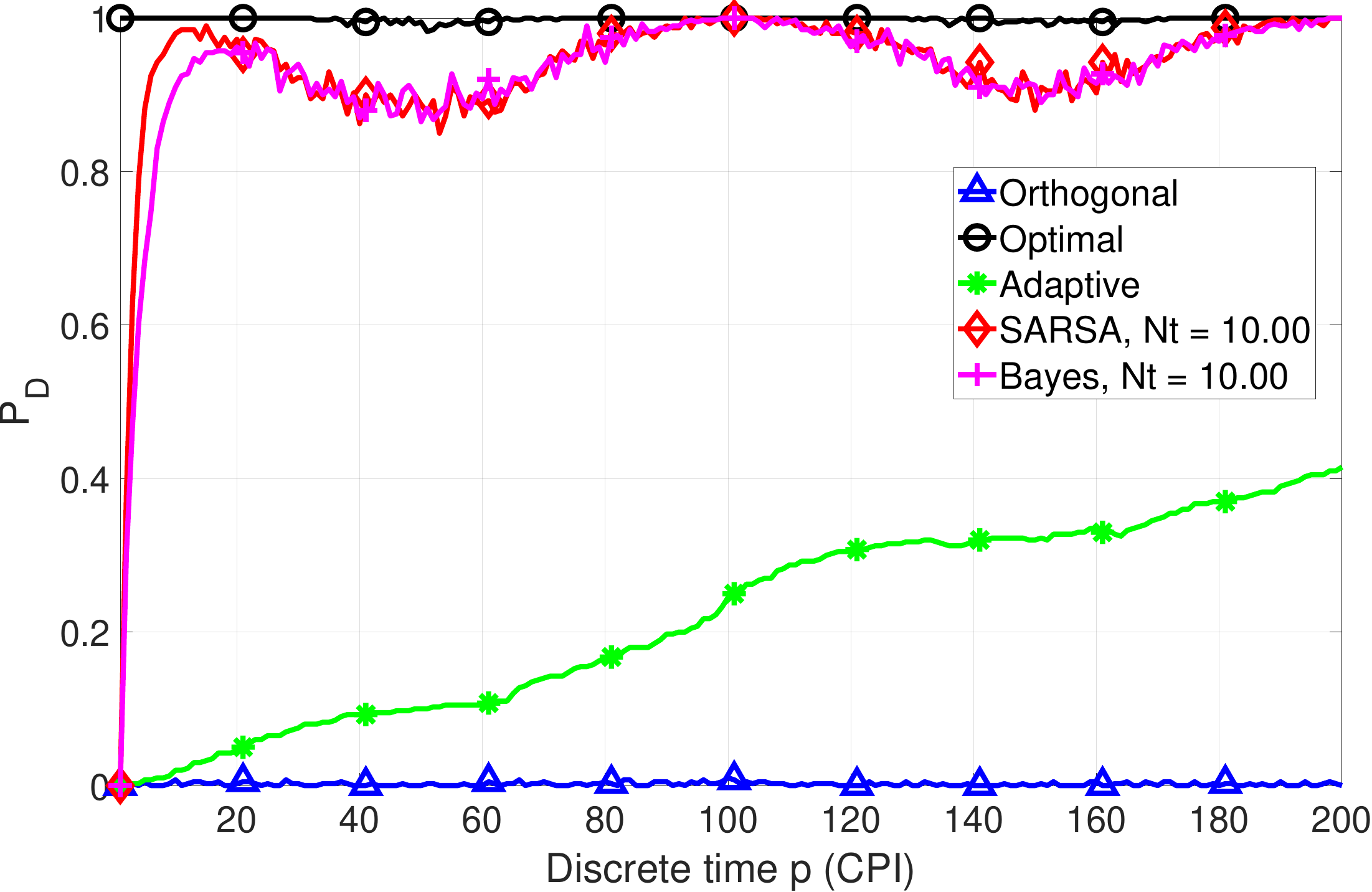}
    \caption{Scenario 3: MIMO (${\rm M_RN_T = 100}$) Decentralized Detection Performance for the Target in Bin 7}
    \label{fig:s3:MIMOT1}
\end{figure}
\begin{figure}[H]
    \centering
    \includegraphics[width=0.95\linewidth]{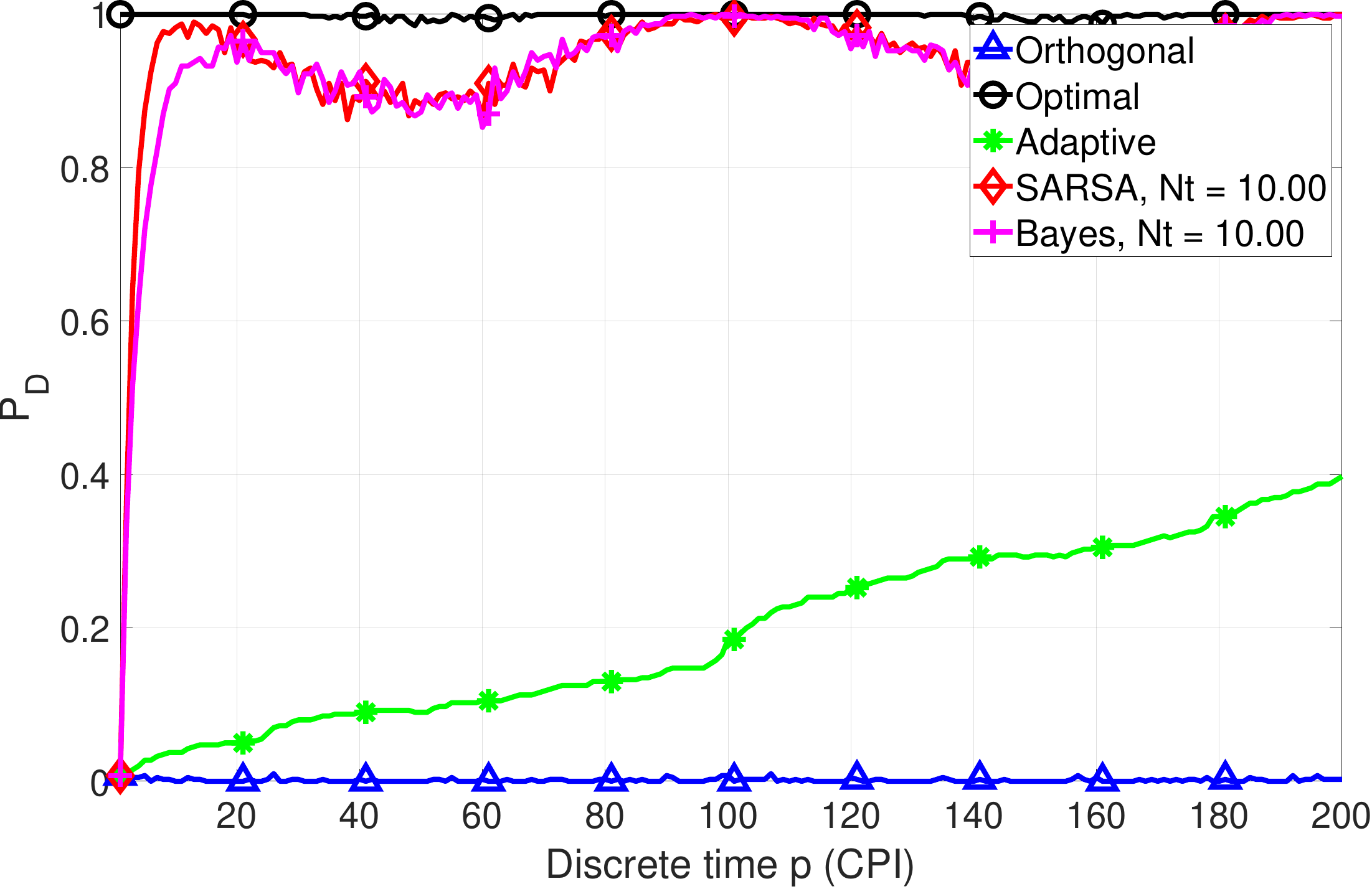}
    \caption{Scenario 3: MIMO (${\rm M_RN_T = 100}$) Decentralized Detection Performance for the Target in Bin 16}
    \label{fig:s3:MIMOT2}
\end{figure}

\subsubsection{Scenario 4}
We now consider Scenario 3 with centralized detection rather than decentralized to show the performance gain. Detection performance for each target is shown in Fig.~\ref{fig:s4:MIMOT1} and Fig.~\ref{fig:s4:MIMOT2}. We can see that just by using centralized combination versus decentralized in the same exact setting, detection performance is improved for both targets. Of course, this comes with the tradeoff of having to transmit the entire received signal from each radar to a central node, rather than just the individual detection statistics. This will result in significantly higher network throughput and could result in processing delays. 
\begin{figure}[H]
    \centering
    \includegraphics[width=0.95\linewidth]{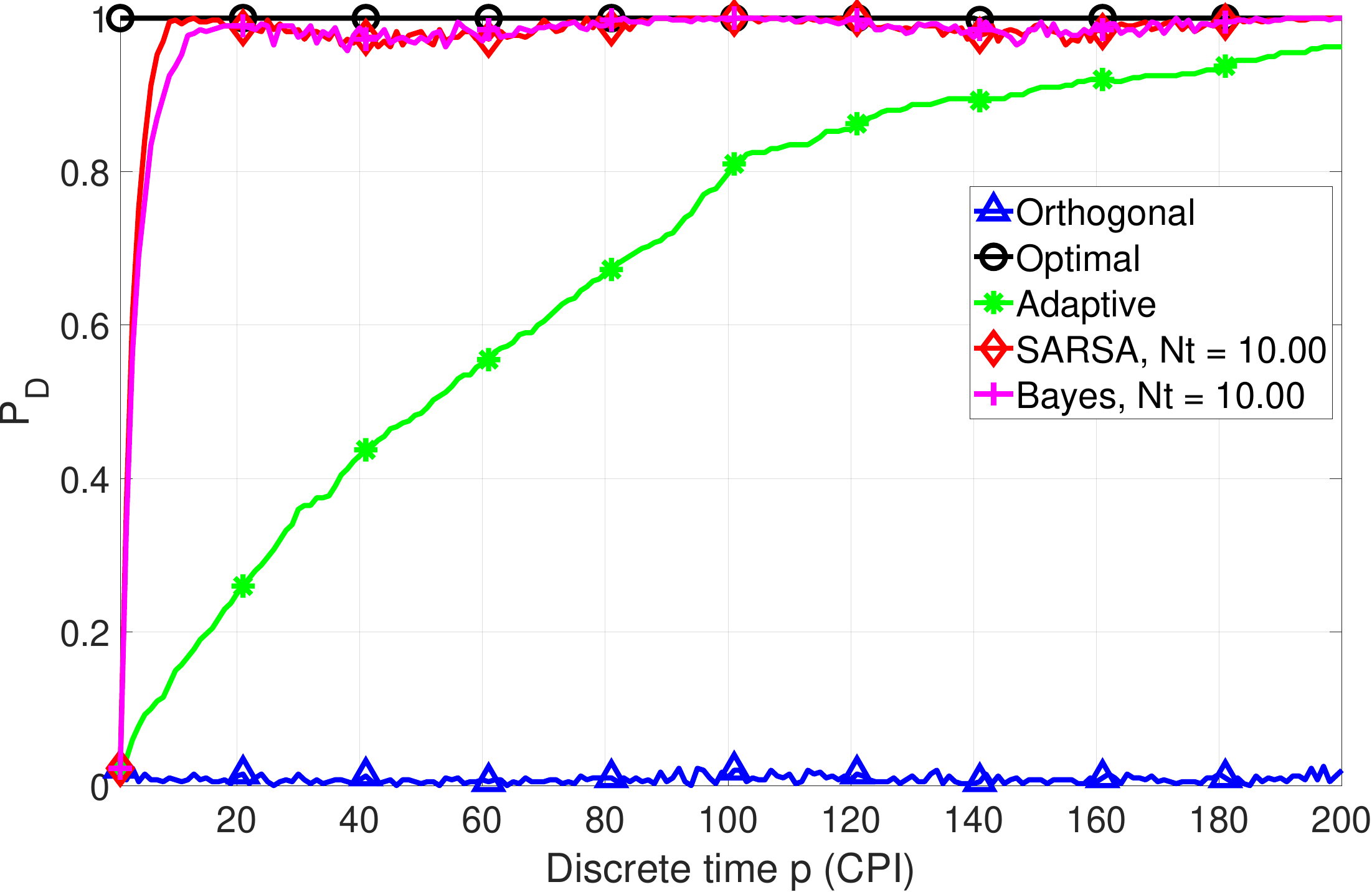}
    \caption{Scenario 4: MIMO (${\rm M_RN_T = 100}$) Centralized Detection Performance for the Target in Bin 7}
    \label{fig:s4:MIMOT1}
\end{figure}
\begin{figure}[H]
    \centering
    \includegraphics[width=0.95\linewidth]{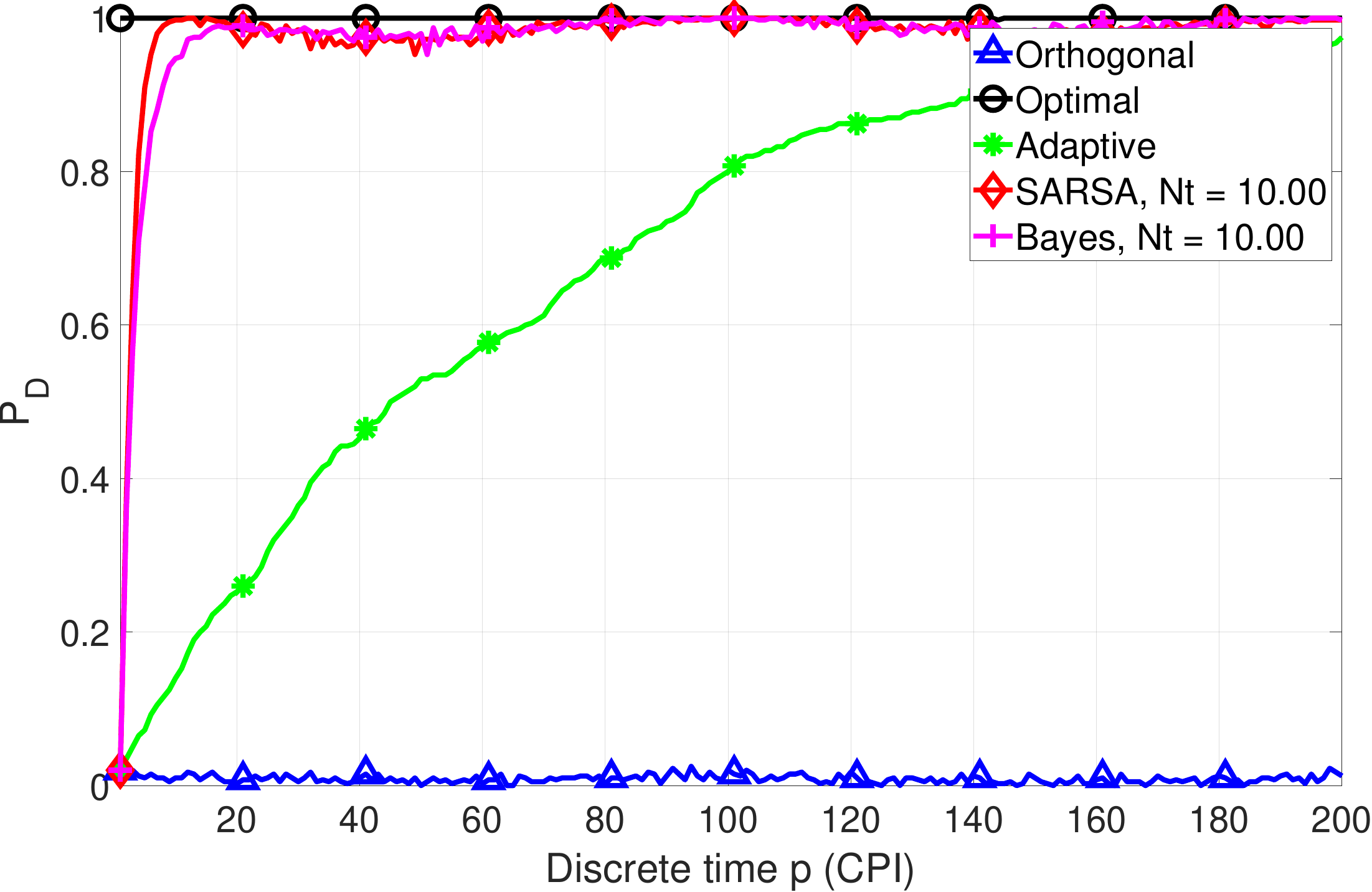}
    \caption{Scenario 4: MIMO (${\rm M_RN_T = 100}$) Centralized Detection Performance for the Target in Bin 16}
    \label{fig:s4:MIMOT2}
\end{figure}

\subsubsection{Scenario 5}
In our previous scenarios for radar networks (Scenarios 3 and 4), we have considered a network of only two radars. While this is sufficient for demonstrating performance improvements, only having two radars does not quite capture the spirit of a radar network. Thus, in this scenario we show centralized detection performance with three radars in a similar setting to Scenario 4. Both targets as seen by the third radar follow an SNR curve described in Fig.~\ref{fig:s5:SNR}. The SNR curve for the other two radars is exactly as previously described in Scenario 4. 

\begin{figure}[H]
    \centering
    \includegraphics[width=0.95\linewidth]{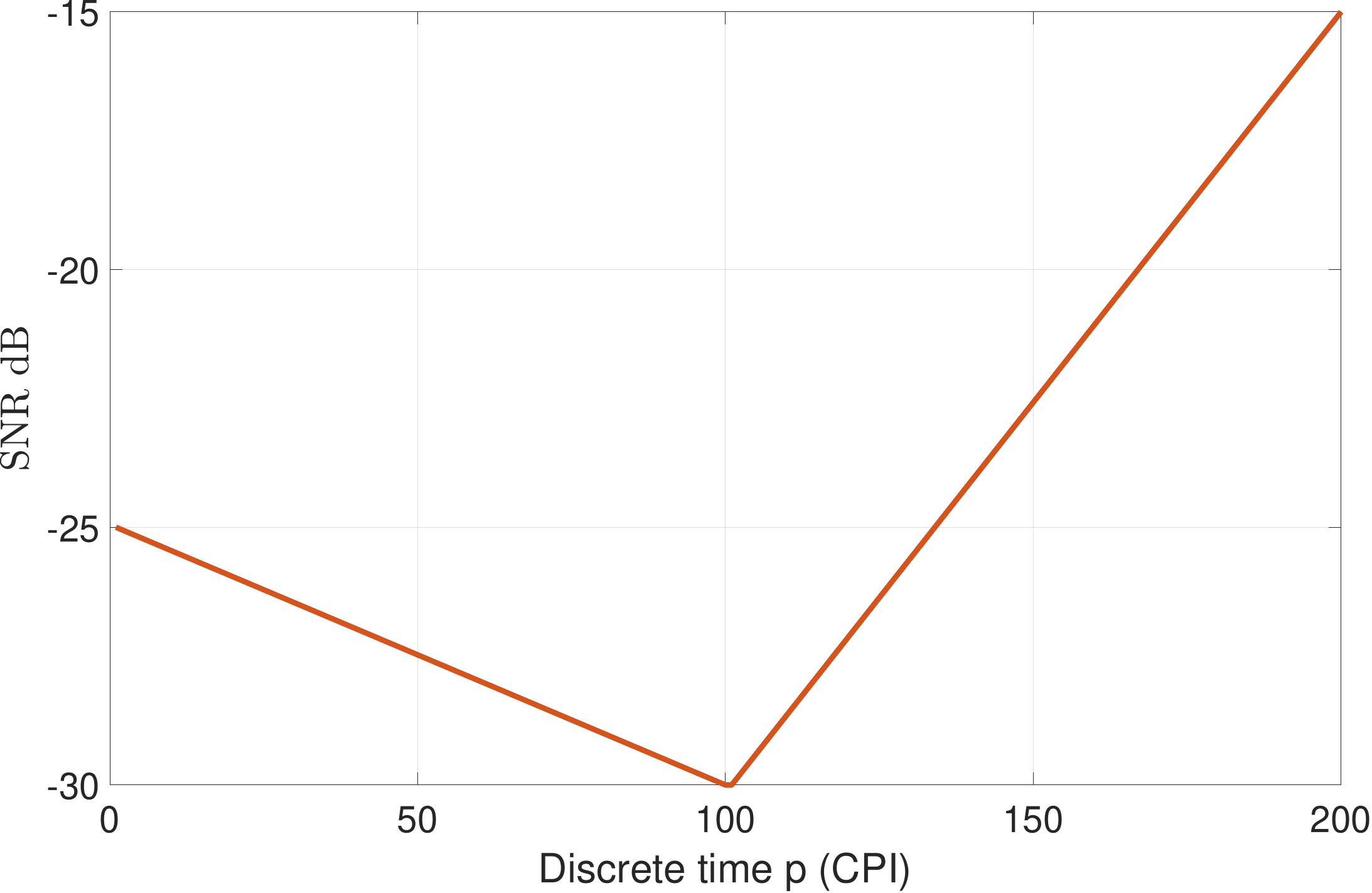}
    \caption{SNR of Third Radar for Both Targets}
    \label{fig:s5:SNR}
\end{figure}

By comparing Fig.~\ref{fig:s5:MIMOT1} and Fig.~\ref{fig:s5:MIMOT2} to Fig.~\ref{fig:s4:MIMOT1} and Fig.~\ref{fig:s4:MIMOT2}, we can see that detection performance has improved. While it may be difficult to tell by looking at the RL based algorithms, we can see that even the orthogonal detection algorithm has vastly improved in performance. This suggests that the overall SNR gain by using centralized detection with additional radars is significant. 

\begin{figure}[H]
    \centering
    \includegraphics[width=0.95\linewidth]{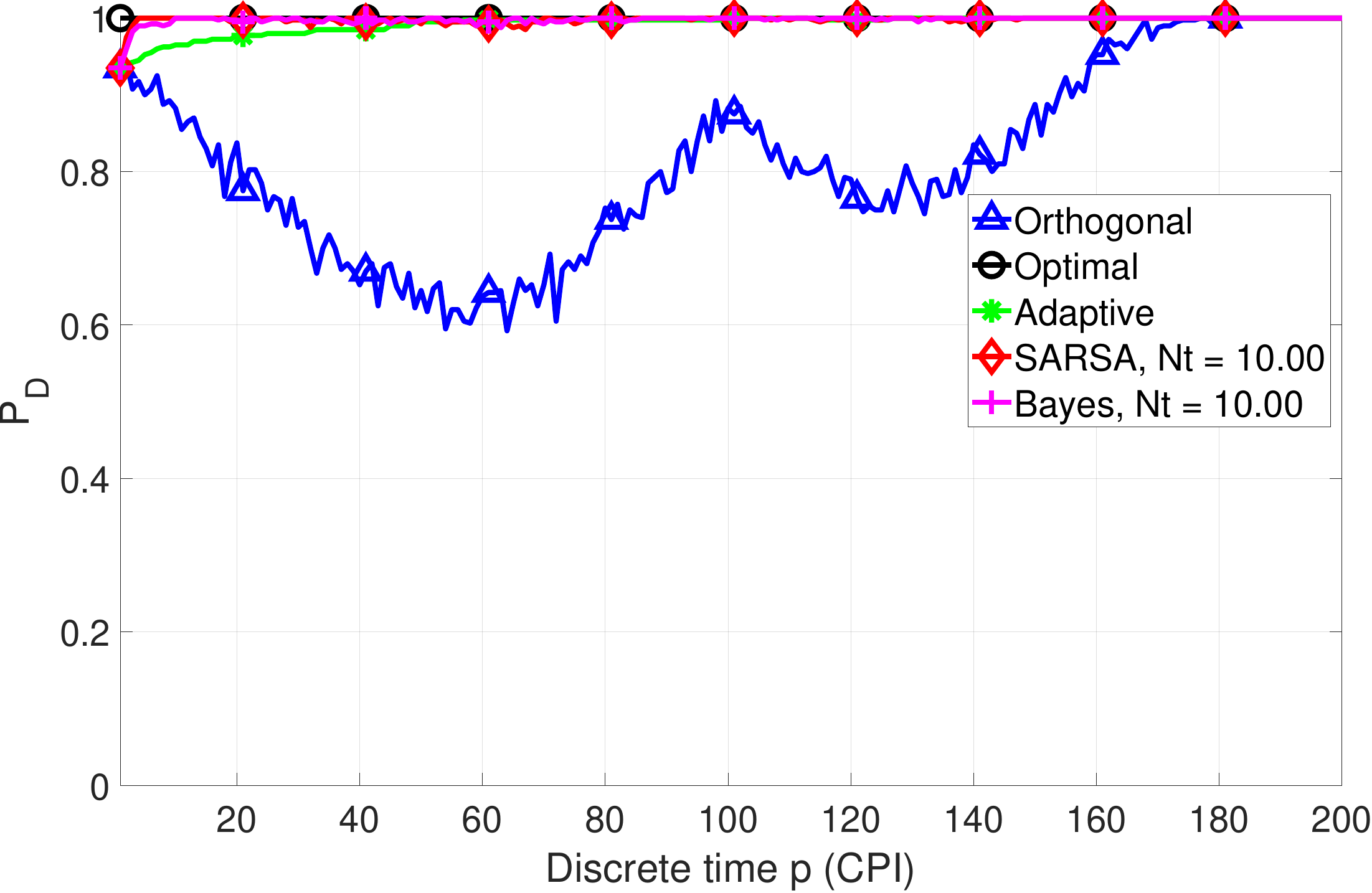}
    \caption{Scenario 5: MIMO (${\rm M_RN_T = 100}$) Centralized Detection Performance for the Target in Bin 7}
    \label{fig:s5:MIMOT1}
\end{figure}
\begin{figure}[H]
    \centering
    \includegraphics[width=0.95\linewidth]{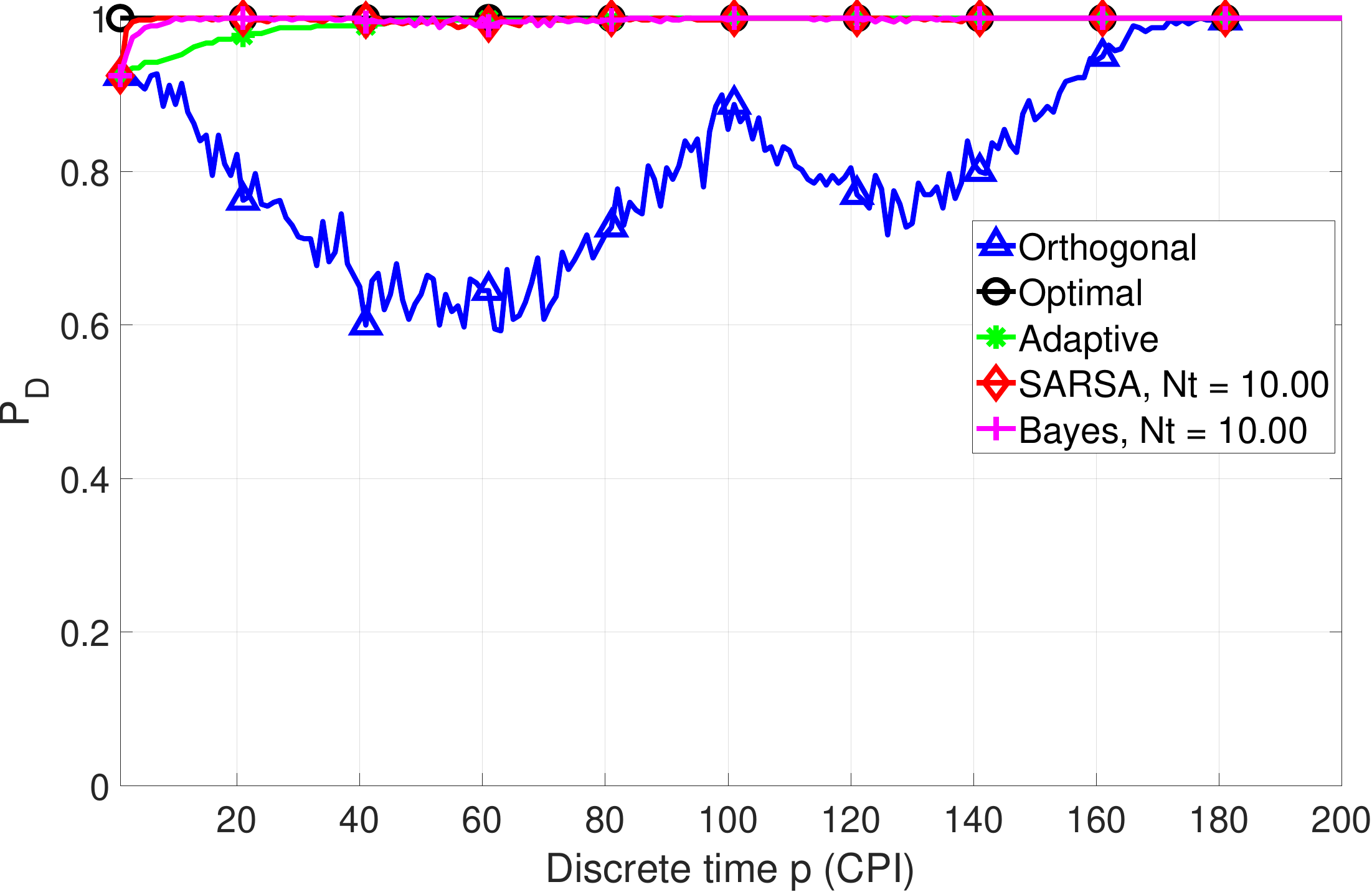}
    \caption{Scenario 5: MIMO (${\rm M_RN_T = 100}$) Centralized Detection Performance for the Target in Bin 16}
    \label{fig:s5:MIMOT2}
\end{figure}

\section{Conclusion}
In this work, we have presented a CFAR detection statistic which is robust to unknown disturbance distributions for co-located MIMO radar. We have described necessary conditions under which this detection statistic is valid, and we have shown that there exist 2-D AR processes which follow these conditions. We have also shown that the use of reinforcement learning greatly improves multi-target detection performance using this test statistic. Finally, we have generalized these aforementioned results to CRNs. We have developed both decentralized and centralized signal fusion techniques which provide a robust detection statistic. Through simulation studies, we have shown the significant detection performance improvements which are a result of a network of radars which work collaboratively. 

Future work should focus on relaxing the necessary conditions outlined for our robust detection statistic to be valid. Furthermore, future work should focus on the joint multi-target detection and tracking problem both with CR and CRNs. Finally, future work could also explore radar node placement such that the condition of a bijective mapping between the angle bins of radars is guaranteed, and performance of CRNs without perfect synchronization could be analyzed.


%





\ifCLASSOPTIONcaptionsoff
  \newpage
\fi



\bibliographystyle{IEEEtran}
\bibliography{bibtex/bib/maindoc}

\begin{thebibliography}{10}
\providecommand{\url}[1]{#1}
\csname url@samestyle\endcsname
\providecommand{\newblock}{\relax}
\providecommand{\bibinfo}[2]{#2}
\providecommand{\BIBentrySTDinterwordspacing}{\spaceskip=0pt\relax}
\providecommand{\BIBentryALTinterwordstretchfactor}{4}
\providecommand{\BIBentryALTinterwordspacing}{\spaceskip=\fontdimen2\font plus
\BIBentryALTinterwordstretchfactor\fontdimen3\font minus \fontdimen4\font\relax}
\providecommand{\BIBforeignlanguage}[2]{{%
\expandafter\ifx\csname l@#1\endcsname\relax
\typeout{** WARNING: IEEEtran.bst: No hyphenation pattern has been}%
\typeout{** loaded for the language `#1'. Using the pattern for}%
\typeout{** the default language instead.}%
\else
\language=\csname l@#1\endcsname
\fi
#2}}
\providecommand{\BIBdecl}{\relax}
\BIBdecl

\bibitem{Haykin2006}
S.~Haykin, ``Cognitive radar: A way of the future,'' \emph{IEEE Signal Process. Mag}, vol.~23, no.~1, pp. 30--40, 2006.

\bibitem{Haykin2005}
------, ``Cognitive radar networks,'' \emph{1st IEEE International Workshop on Computational Advances in Multi-Sensor Adaptive Processing, 2005}, pp. 1--3, 2005.

\bibitem{Haimovich2008}
A.~M. Haimovich, R.~S. Blum, and L.~J. Cimini, ``{MIMO} radar with widely separated antennas,'' \emph{IEEE Signal Processing Magazine}, vol.~25, no.~1, pp. 116--129, 2008.

\bibitem{Stoica2007}
J.~Li and P.~Stoica, ``{MIMO} radar with colocated antennas,'' \emph{IEEE Signal Processing Magazine}, vol.~24, no.~5, pp. 106--114, 2007.

\bibitem{Watts1987}
S.~Watts, ``Radar detection prediction in k-distributed sea clutter and thermal noise,'' \emph{IEEE Transactions on Aerospace and Electronic Systems}, vol. AES-23, no.~1, pp. 40--45, 1987.

\bibitem{Ollila2012}
E.~Ollila, D.~E. Tyler, V.~Koivunen, and H.~V. Poor, ``Complex elliptically symmetric distributions: Survey, new results and applications,'' \emph{IEEE Transactions on Signal Processing}, vol.~60, no.~11, pp. 5597--5625, 2012.

\bibitem{Kammoun2018}
A.~Kammoun, R.~Couillet, F.~Pascal, and M.-S. Alouini, ``Optimal design of the adaptive normalized matched filter detector using regularized tyler estimators,'' \emph{IEEE Transactions on Aerospace and Electronic Systems}, vol.~54, no.~2, pp. 755--769, 2018.

\bibitem{metcalf2015}
J.~Metcalf, S.~D. Blunt, and B.~Himed, ``A machine learning approach to cognitive radar detection,'' \emph{2015 IEEE Radar Conference (RadarCon)}, pp. 1405--1411, 2015.

\bibitem{greco2020}
S.~Fortunati, L.~Sanguinetti, F.~Gini, M.~S. Greco, and B.~Himed, ``Massive {MIMO} radar for target detection,'' \emph{IEEE Transactions on Signal Processing}, vol.~68, pp. 859--871, 2020.

\bibitem{Greco2021}
A.~M. Ahmed, A.~A. Ahmad, S.~Fortunati, A.~Sezgin, M.~S. Greco, and F.~Gini, ``A reinforcement learning based approach for multitarget detection in massive {MIMO} radar,'' \emph{IEEE Transactions on Aerospace and Electronic Systems}, vol.~57, no.~5, pp. 2622--2636, 2021.

\bibitem{greco2022}
F.~Lisi, S.~Fortunati, M.~S. Greco, and F.~Gini, ``Enhancement of a state-of-the-art rl-based detection algorithm for massive {MIMO} radars,'' \emph{IEEE Transactions on Aerospace and Electronic Systems}, vol.~58, no.~6, pp. 5925--5931, 2022.

\bibitem{wang2024}
Z.~Wang, W.~Xie, Z.~Zhou, H.~Meng, and M.~Yang, ``Reinforcement learning-based {MIMO} radar multitarget detection assisted by bayesian inference,'' \emph{IEEE Transactions on Aerospace and Electronic Systems}, vol.~60, no.~4, pp. 4463--4478, 2024.

\bibitem{Tony2018}
A.~F. Martone, K.~I. Ranney, K.~Sherbondy, K.~A. Gallagher, and S.~D. Blunt, ``Spectrum allocation for noncooperative radar coexistence,'' \emph{IEEE Transactions on Aerospace and Electronic Systems}, vol.~54, no.~1, pp. 90--105, 2018.

\bibitem{howard2021}
W.~W. Howard, C.~E. Thornton, A.~F. Martone, and R.~Michael~Buehrer, ``Multi-player bandits for distributed cognitive radar,'' \emph{2021 IEEE Radar Conference (RadarConf21)}, pp. 1--6, 2021.

\bibitem{howard2023}
W.~W. Howard, A.~F. Martone, and R.~M. Buehrer, ``Distributed online learning for coexistence in cognitive radar networks,'' \emph{IEEE Transactions on Aerospace and Electronic Systems}, vol.~59, no.~2, pp. 1202--1216, 2023.

\bibitem{Rossetti2018}
G.~Rossetti and S.~Lambotharan, ``Robust waveform design for multistatic cognitive radars,'' \emph{IEEE Access}, vol.~6, pp. 7464--7475, 2018.

\bibitem{Varshney1986}
Z.~Chair and P.~Varshney, ``Optimal data fusion in multiple sensor detection systems,'' \emph{IEEE Transactions on Aerospace and Electronic Systems}, vol. AES-22, no.~1, pp. 98--101, 1986.

\bibitem{Varshney1996}
P.~K. Varshney, \emph{Distributed Detection and Data Fusion}.\hskip 1em plus 0.5em minus 0.4em\relax Springer, 1996.

\bibitem{Reibman1987}
A.~R. Reibman and L.~W. Nolte, ``Optimal detection and performance of distributed sensor systems,'' \emph{IEEE Transactions on Aerospace and Electronic Systems}, vol. AES-23, no.~1, pp. 24--30, 1987.

\bibitem{thomopoulos1987}
S.~C. Thomopoulos, R.~Viswanathan, and D.~C. Bougoulias, ``Optimal decision fusion in multiple sensor systems,'' \emph{IEEE Transactions on Aerospace and Electronic Systems}, vol. AES-23, no.~5, pp. 644--653, 1987.

\bibitem{Lauer1982}
G.~Lauer and N.~Sandell, ``Distributed detection with waveform observations: Correlated observation processes,'' in \emph{1982 American Control Conference}, 1982, pp. 812--819.

\bibitem{drakopolous}
E.~Drakopoulos and C.-C. Lee, ``Optimum multisensor fusion of correlated local decisions,'' \emph{IEEE Transactions on Aerospace and Electronic Systems}, vol.~27, no.~4, pp. 593--606, 1991.

\bibitem{kam1992}
M.~Kam, Q.~Zhu, and W.~Gray, ``Optimal data fusion of correlated local decisions in multiple sensor detection systems,'' \emph{IEEE Transactions on Aerospace and Electronic Systems}, vol.~28, no.~3, pp. 916--920, 1992.

\bibitem{kay2013}
S.~Kay and Q.~Ding, ``On the performance of independent processing of independent data sets for distributed detection,'' \emph{IEEE Signal Processing Letters}, vol.~20, no.~6, pp. 619--622, 2013.

\bibitem{Mrstik1978}
A.~Mrstik, ``Multistatic-radar binomial detection,'' \emph{IEEE Transactions on Aerospace and Electronic Systems}, vol. AES-14, no.~1, pp. 103--108, 1978.

\bibitem{Hack2014}
D.~E. Hack, L.~K. Patton, B.~Himed, and M.~A. Saville, ``Detection in passive mimo radar networks,'' \emph{IEEE Transactions on Signal Processing}, vol.~62, no.~11, pp. 2999--3012, 2014.

\bibitem{1Friedlander2012}
B.~Friedlander, ``On signal models for {MIMO} radar,'' \emph{IEEE Trans. Aerosp. Electron. Syst.}, vol.~48, no.~4, p. 3655–3660, 2012.

\bibitem{2Friedlander2012}
------, ``On transmit beamforming for {MIMO} radar,'' \emph{IEEE Trans. Aerosp. Electron. Syst.}, vol.~48, no.~4, p. 3376–3388, 2012.

\bibitem{rosenberg2012}
L.~Rosenberg, ``The effect of temporal correlation with k and kk-distributed sea-clutter,'' \emph{2012 IEEE Radar Conference}, pp. 0303--0308, 2012.

\bibitem{sayama2001}
S.~Sayama and H.~Sekine, ``Weibull, log-weibull and k-distributed ground clutter modeling analyzed by aic,'' \emph{IEEE Transactions on Aerospace and Electronic Systems}, vol.~37, no.~3, pp. 1108--1113, 2001.

\bibitem{Billingsley1997}
B.~Billingsley, A.~Farina, F.~Gini, M.~Greco, and P.~Lombardo, ``Impact of experimentally measured doppler spectrum of ground clutter on mti and stap,'' \emph{Radar 97 (Conf. Publ. No. 449)}, pp. 290--294, 1997.

\bibitem{nuttall1975}
A.~Nuttall, ``Some integrals involving the {Q} function (corresp.),'' \emph{IEEE Transactions on Information Theory}, vol.~21, no.~1, pp. 95--96, 1975.

\bibitem{stoica2005spectral}
P.~Stoica and R.~Moses, \emph{Spectral Analysis of Signals}.\hskip 1em plus 0.5em minus 0.4em\relax Pearson Prentice Hall, 2005.

\bibitem{Marzetta1978ALP}
T.~L. Marzetta, ``A linear prediction approach to two-dimensional spectral factorization and spectral estimation,'' Ph.D. dissertation, Massachussetts Institute of Technology, 1978.

\bibitem{Ranganath1985}
S.~Ranganath and A.~Jain, ``Two-dimensional linear prediction models--part i: Spectral factorization and realization,'' \emph{IEEE Transactions on Acoustics, Speech, and Signal Processing}, vol.~33, no.~1, pp. 280--299, 1985.

\end{thebibliography}

\end{document}